\newtheorem{defn}[theorem]{Definition}
\newtheorem{lem}[theorem]{Lemma}
\newtheorem{rem}[theorem]{Remark}
\newtheorem{prop}[theorem]{Proposition}
\definecolor{compositionalitypink}{RGB}{255,90,95}
\definecolor{compositionalitydarkpink}{RGB}{255,70,76}
\definecolor{quantumgray}{RGB}{85, 85, 85}
\definecolor{quantumviolet}{RGB}{83, 37, 127} 
\begin{document}

\title{Every non-signalling channel is common-cause realizable}

\author{Paulo J. Cavalcanti}
\email{paulojcvf@gmail.com}
\affiliation{International Centre for Theory of Quantum Technologies,
	University of Gda\'nsk, 80-309 Gda\'nsk, Poland}
\author{John H.~Selby}
\email{john.h.selby@gmail.com}
\affiliation{International Centre for Theory of Quantum Technologies,
	University of Gda\'nsk, 80-309 Gda\'nsk, Poland}
\author{Ana Bel\'en Sainz}
\email{ana.sainz@ug.edu.pl}
\affiliation{International Centre for Theory of Quantum Technologies,
	University of Gda\'nsk, 80-309 Gda\'nsk, Poland}

\date{\today}

\begin{abstract}
	In this work we show that the set of non-signalling resources of a
	locally-tomographic  generalised probabilistic theory (GPT), such as
	quantum
	and classical theory, coincides with its set of GPT-common-cause
	realizable
	resources, where the common causes come from an associated GPT. From a
	causal
	perspective, this result provides a reason for, in the study of
	resource
	theories of common-cause processes, taking the non-signalling channels
	as the
	resources of the enveloping theory. This answers a critical open
	question in
	Ref.~\cite{schmid2020postquantum}. An immediate corollary of our result
	is that
	every non-signalling assemblage is realizable in a GPT, answering in
	the
	affirmative the question posed in Ref.~\cite{cavalcanti2022post}.
\end{abstract}

\maketitle
\tableofcontents

\section{Introduction}
There has been a great deal of recent interest in the study of resource
theories \cite{coecke2016mathematical} in which the free operations are either
Local Operations and Shared Randomness (LOSR)
\cite{de2014nonlocality,GellerPiani,cowpie,schmid2020standard,EPRLOSR,
	zjawin2022resource}, for the purposes of studying nonlocality and
entanglement,
or Local Operations and Shared Entanglement (LOSE)
\cite{schmid2020postquantum}, for the purposes of studying post-quantum
nonlocality.
In particular, it has been shown that these can be studied in a
type-independent manner
\cite{schmid2019typeindependent,rosset2019characterizing,schmid2020postquantum}
such that resources of various types (entangled states, nonlocal boxes,
steerable assemblages, etc.) can be treated in a uniform and unified way.
These resource theories are motivated by the idea that the best way to
understand Bell's theorem is from the perspective of causal models
\cite{Wood_2015,cowpie}, and that the lesson to be learnt from Bell's theorem
is that we need an intrinsically quantum notion of causality and of common
causes \cite{schmid2020unscrambling,cowpie}.

Defining a resource theory requires a specification of both a \emph{free} and
an \emph{enveloping} theory \cite{coecke2016mathematical}. The free theory
specifies the things that can be done effectively without cost, whilst the
enveloping theory specifies the things that can be done irrespective of cost.
Whilst in the study of LOSE and LOSR it is clear how the free theory should be
defined, it is not clear how the \emph{enveloping theory} should be defined
\cite{schmid2020postquantum}.
There are two options for this, each of which has pros and cons. On the one
hand, we have the choice which is typically made, which is to use the
enveloping theory which describes \emph{non-signalling} resources. The benefit
of this choice is that it is mathematically simple to characterise, since in
the cases of interest so far the set of such resources can often be expressed
in a computationally-easy way (polytope, or semidefinite programme)
\cite{BellRev,sainz2015postquantum}. Its downside, however, is that this
enveloping theory is not so well motivated from a causal perspective -- it
makes sense to say that resources should be non-signalling, but why should
\emph{all} non-signalling resources be considered? On the other hand, we can
take the enveloping theory to describe	arbitrary \emph{common cause}
resources, typically described using the framework of generalised probabilistic
theories (GPTs) subsuming classical and quantum common causes as special cases.
The benefit of this approach is that it is conceptually well motivated, from
the causal perspective \cite{cowpie}. Its downside, however, is that providing
a clean mathematical characterisation of this enveloping theory is an open
problem. The characterisation and the relationship between these two options
was cleanly articulated as an open question in Ref.~\cite[Open Question
	1]{schmid2020postquantum}.

In this paper we resolve the tension between these two choices, by showing that
these two options actually coincide. This means that we get the benefits of
both approaches with none of the downsides. It is well established that every
common-cause realisable resource is non-signalling, so here we just focus on
the converse direction. In particular, we  show  that there exists a GPT in
which all non-signalling resources of a target locally tomographic GPT, such as
quantum theory, can be realised in a common-cause setting.
On the one hand, we can view this result as, for the first time, providing a
clear characterisation of the set of GPT-realisable resources. On the other
hand, we can also view it as providing a principled justification, backed by
the causal perspective, for choosing the set of non-signalling resources as the
enveloping theory in resource theories of common-cause processes.
We moreover show that this result holds not only in the bipartite case, which
has so far dominated the literature, but also in the general multipartite
scenario, thereby setting the stage for explorations of multipartite
generalisations of LOSR and LOSE resource theories.
A corollary of this result answers one of the open questions posed in
Ref.~\cite{cavalcanti2022post}, namely it shows that indeed any non-signalling
assemblage can be given a GPT-common-cause explanation.

The scheme by which we build the GPT where all non-signalling resources can be
realised in a common-cause setting differs from the standard approach to GPT
construction in the literature.
Usually, GPTs are constructed by making reference to the geometry of their
states, effects, and transformations spaces, requiring, for example, that they
are convex subsets of linear spaces (see, e.g., Ref.~\cite{barrett2007gpt}).
Here, instead of putting emphasis on the geometry, we focus our attention on
compositionality, that is we take a process-theoretic
\cite{coecke2011universe,coecke2018picturing,gogioso2017categorical,selby2018reconstructing}
approach to constructing one GPT from another.
By focusing on the compositional properties of the theory, our method also has
the potential to be applied to other problems.

To be more formal, let us define a \emph{common-cause completion} of a given
GPT $\mathbf{G}$ as an supertheory of $\mathbf{G}$ which can realise all of the
non-signalling resources $\mathbf{G}$ in a common-cause scenario. If some
theory is the common-cause completion of itself, then we call it
\emph{common-cause complete}, in contrast to quantum and classical theory which
have non-signalling resources which cannot be realised in common cause
scenarios, being  therefore common-cause incomplete. In this paper, we define a
common-cause completion map, $\mathcal{C}$, which takes an arbitrary
tomographically-local GPT $\mathbf{G}$ as an input, and gives a common-cause
completion of it, $\mathcal{C}[\mathbf{G}]$, as output. Specifically, this
means that $\mathbf{G}$ is a full subtheory of $\mathcal{C}[\mathbf{G}]$ and
that every non-signalling resource in $\mathbf{G}$ can be realised with only
common-cause resources in $\mathcal{C}[\mathbf{G}]$. Proving the existence of
such a common-cause completion map demonstrates the main claims of this paper:
i) all non-signalling resources in $\mathbf{G}$ are common-cause realisable in
the GPT $\mathcal{C}[\mathbf{G}]$, and so the non-signalling resources in
$\mathbf{G}$ coincide with its GPT-common-cause realisable processes; and ii)
every non-signalling assemblage in $\mathbf{G}$ is realisable in an EPR
scenario in $\mathcal{C}[\mathbf{G}]$.

\section{Generalised probabilistic theories (GPTs)}\label{GPTs}

In this section, we provide a concise overview of Generalized Probabilistic
Theories (GPTs) \cite{hardy2001quantum,barrett2007gpt}, emphasizing their
compositional attributes. We provide a brief introduction here, and refer the
interested reader to, for example,
Refs.~\cite{muller2021probabilistic,plavala2021general} for more details.
Specifically, we are following the formalism of
Refs.~\cite{gogioso2017categorical,selby2018reconstructing}.

Conceptually, a GPT is a theory about experiments that assigns probabilities to
observation events, equipped with a compositional structure that mirrors the
possibility we have to perform actions sequentially or in parallel. Formally
speaking, the compositional aspects of the theory are captured by the fact that
a GPT is a (strict) symmetric monoidal category (SMC) (see App.~\ref{ap:SMC}).
The probabilistic aspects are captured by the fact that we have a classical
(stochastic) interface with the full theory in order to represent outcomes and
control variables, formally, this means that we have the SMC $\mathbf{Stoch}$
(Sec.~\ref{sec:stoch}) as a full subtheory. This leads to a convex structure
(Sec.~\ref{sec:conv}) on the sets of processes with a given input and output,
and allows us to define a notion of tomography (Sec.~\ref{sec:tomog}). Finally,
we capture the requirement that the theory interact well with relativistic
causal structure, by demanding the existence of unique discarding maps
(Sec.~\ref{sec:discard}).

In the rest of this section, we will introduce the diagrammatic notation used
throughout this work, and discuss the defining features of a GPT that we
mentioned above.

\subsection{Diagrammatic notation}\label{diagrams}
An interesting feature of SMCs is that they have a diagrammatic representation
with which we can perform every calculation that we could using their axiomatic
definition \cite{joyal1991geometry,selinger2010survey,patterson2021wiring}.
In the context of GPTs, we can represent their processes as boxes with input
and output wires, and encode the composition of these processes by how they are
wired together.

In the diagrammatic notation, each wire is named to represent a system type,
and we follow the convention where those connected to the bottom of the boxes
represent the input types of the process, while those at the top are the
outputs. Note that this means that, in our convention, ``time'' in the diagrams
flows from the bottom up.
In this way we can represent a process $f: A \to  B$, that takes a system of
type $A$ to a system of type $B$, as follows:
\begin{equation}\label{}
	\quad f:A \to  B\quad \doteq\quad %
\InputIfFileExists{Diagrams/f.tikz}{}{\input{./figures/Diagrams/f.tikz}},
\end{equation}
where we are using $\doteq$ to indicate the translation from one notation into
another.

We often omit wire labels for simplicity, and/or use colors to encode certain
information about the system type. For instance, in this paper we will use
\begin{equation}\label{}
\begin{tikzpicture}
	\begin{pgfonlayer}{nodelayer}
		\node [style=none] (0) at (0, 0.75) {};
		\node [style=none] (1) at (0, -0.75) {};
		\node [style=none] (4) at (0, 1.25) {};
		\node [style=none] (5) at (0, -1.25) {};
	\end{pgfonlayer}
	\begin{pgfonlayer}{edgelayer}
		\draw [cWire] (0.center) to (1.center);
	\end{pgfonlayer}
\end{tikzpicture}
},\quad %
\begin{tikzpicture}
	\begin{pgfonlayer}{nodelayer}
		\node [style=none] (0) at (0, 0.75) {};
		\node [style=none] (1) at (0, -0.75) {};
		\node [style=none] (2) at (0, 1.25) {};
		\node [style=none] (3) at (0, -1.25) {};
	\end{pgfonlayer}
	\begin{pgfonlayer}{edgelayer}
		\draw [style=qWire] (0.center) to (1.center);
	\end{pgfonlayer}
\end{tikzpicture}
},\quad
\begin{tikzpicture}
	\begin{pgfonlayer}{nodelayer}
		\node [style=none] (0) at (0, 0.75) {};
		\node [style=none] (1) at (0, -0.75) {};
		\node [style=none] (2) at (0, 1.25) {};
		\node [style=none] (3) at (0, -1.25) {};
	\end{pgfonlayer}
	\begin{pgfonlayer}{edgelayer}
		\draw [style=pWire, in=90, out=-90] (0.center) to (1.center);
	\end{pgfonlayer}
\end{tikzpicture}
},\quad \text{ or } \quad
\begin{tikzpicture}
	\begin{pgfonlayer}{nodelayer}
		\node [style=none] (0) at (0, 0.75) {};
		\node [style=none] (1) at (0, -0.75) {};
		\node [style=none] (2) at (0, 1.25) {};
		\node [style=none] (3) at (0, -1.25) {};
	\end{pgfonlayer}
	\begin{pgfonlayer}{edgelayer}
		\draw [style=gWire] (0.center) to (1.center);
	\end{pgfonlayer}
\end{tikzpicture}
} \,,
\end{equation}
where, for example, the first of these represents a classical system of
unspecified dimension, and the meaning of the others will be explained in
section \ref{results}.
To represent composite types such as $A \otimes B$, we just put their wires
side by side, as in
\begin{equation}\label{}
	A \otimes B\quad \doteq\quad %
\InputIfFileExists{Diagrams/AB.tikz}{}{\input{./figures/Diagrams/AB.tikz}} \,.
\end{equation}
Using this notation for composite systems, a process with composite input or
output wires is depicted as having multiple input/output wires, e.g.,
\begin{equation}\label{}
	f: A \otimes B \to C \otimes D \otimes E\quad \doteq\quad
\InputIfFileExists{Diagrams/fabcde.tikz}{}{\input{./figures/Diagrams/fabcde.tikz}} .
\end{equation}

One system type that every GPT must contain is the trivial system, which
corresponds to having no system at all. We refer to it in text as $I$. Since
the trivial system is the unit for parallel composition (i.e., the monoidal
unit of the symmetric monoidal category), we have $A \otimes I = A =  I \otimes
	A$, diagrammatically, $I$ is represented by empty space:
\begin{equation}\label{}
	I \quad \doteq\quad  %
\begin{tikzpicture}
	\begin{pgfonlayer}{nodelayer}
		\node [style=none] (0) at (0, 0.75) {};
		\node [style=none] (1) at (0, -0.75) {};
	\end{pgfonlayer}
	\begin{pgfonlayer}{edgelayer}
		\draw [style=thick gray dashed edge] (0.center) to (1.center);
	\end{pgfonlayer}
\end{tikzpicture}
} \,.
\end{equation}
States and effects can be seen as preparation and observation procedures,
respectively, which are processes that start and end in the trivial system,
i.e., they must not have input or output wires respectively. For example, if
$s$ is a state and $e$ an effect, then we denote them as
\begin{equation}\label{}
	s:I\to A\ \ \doteq\ \ %
\begin{tikzpicture}
	\begin{pgfonlayer}{nodelayer}
		\node [style=point] (0) at (0, 0) {$s$};
		\node [style=none] (5) at (0, 1.25) {};
	\end{pgfonlayer}
	\begin{pgfonlayer}{edgelayer}
		\draw [qWire] (5.center) to (0);
	\end{pgfonlayer}
\end{tikzpicture}
} \quad\text{ and} \quad
	e:A\to I\ \ \doteq\ \  %
\begin{tikzpicture}
	\begin{pgfonlayer}{nodelayer}
		\node [style=copoint] (4) at (0, 0) {$e$};
		\node [style=none] (5) at (0, -1.25) {};
	\end{pgfonlayer}
	\begin{pgfonlayer}{edgelayer}
		\draw [qWire] (5.center) to (4);
	\end{pgfonlayer}
\end{tikzpicture}
}\,.
\end{equation}
There can also be processes with both input and output as the trivial system,
$p: I \to I$, which are represented by diagrams without open wires.
The compositional properties of the SMC imply that diagrams of this kind can be
composed together with a multiplicative structure, and hence can be called
numbers. For instance, we could have
\begin{equation}\label{probabilityDiagram}
	1: I \to I\quad \doteq\quad %
\begin{tikzpicture}
	\begin{pgfonlayer}{nodelayer}
		\node [style=scalar] (0) at (0, 0) {$1$};
	\end{pgfonlayer}
	\begin{pgfonlayer}{edgelayer}
	\end{pgfonlayer}
\end{tikzpicture}
}.
\end{equation}

Finally, we represent the parallel composition $ \otimes $ of processes by
drawing their boxes side by side, and their sequential composition $ \circ$ by
connecting the input and output wires of matching types. That is, for $f:A \to
	B$ and $g:C \to D$
\begin{equation}\label{}
	f \otimes g:A \otimes C \to B \otimes D\quad \doteq\quad
\InputIfFileExists{Diagrams/ftg.tikz}{}{\input{./figures/Diagrams/ftg.tikz}} \,,
\end{equation}
and  for $f:A \to B$ and $g:B \to C$
\begin{equation}\label{}
	g \circ f : A \to C\quad \doteq\quad %
\InputIfFileExists{Diagrams/gof.tikz}{}{\input{./figures/Diagrams/gof.tikz}}\,.
\end{equation}
One example of a more complex diagram is
\begin{equation}\label{}
\InputIfFileExists{Diagrams/abcde.tikz}{}{\input{./figures/Diagrams/abcde.tikz}}\,,
\end{equation}
where we omit the labels of the wires, but it should be understood that
connections are allowed only when types match.

This notation and the rules for composing diagrams are common to all (strict)
symmetric monoidal categories. Now, it remains to discuss features that are
shared only by those who can be considered as GPTs.
Since one of the ingredients of a GPT is that they contain $\mathbf{Stoch}$ as
a full subtheory, we start from the definition of that theory.

\subsection{Example: classical stochastic maps}\label{sec:stoch}

As we mentioned, any GPT must have $\mathbf{Stoch}$ as a full subtheory. The
simplest possible GPT, then is the one that contains nothing else (if the other
properties are satisfied, of course, which is the case).

In order to define $\mathbf{Stoch}$, all we have to do is to define what
concrete mathematical objects correspond to its system types, states, effects,
transformations, and composition rules (parallel and sequential composition).
We organized this information in the following table:

\begin{center}
	\begin{tabular}{||c|c|c||}
		\hline
		Element                & Definition
		                       & Example                               \\
		\hline
		System types           & Real vector spaces
		                       & $ \mathds{R}^{2}$                     \\
		\hline
		States                 & Probability column vectors
		                       & $\begin{pmatrix} 1/2 \\ 1/2

			                          \end{pmatrix}$            \\
		\hline
		Effects                & Row vectors whose all entries are
		equal to 1             & $\begin{pmatrix} 1

				                           & 1\end{pmatrix}
		$
		\\
		\hline
		Transformations        & Stochastic matrices
		                       & $\begin{pmatrix} 1/2 & 1/3 \\ 1/2 &
                   2/3\end{pmatrix}
		$
		\\
		\hline
		Sequential Composition & Matrix multiplication
		                       & $\begin{pmatrix} 1/2 & 1/3
                \\ 1/2 & 2/3\end{pmatrix}
			\begin{pmatrix} 1/2 \\ 1/2 \end{pmatrix}
		$
		\\
		\hline
		Parallel Composition   & Kronecker product (or tensor product)
		                       &
		$\begin{pmatrix} 1/2 \\ 2/3 \end{pmatrix} \otimes
		\begin{pmatrix} 1/2 \\ 1/2

			\end{pmatrix} $                                      \\
		\hline
	\end{tabular}
\end{center}

Because we require that GPTs have this theory as a full subtheory, it will act
as an interface to provide the GPT with the probabilistic interpretation that
we need. For example, in this framework we describe a measurement as a process
from a general system to a system in $\mathbf{Stoch}$.

\subsection{Defining properties of causal GPTs}
Not every SMC can be considered as being a hypothetical theory of physics. In
this section, we characterise those that can. In particular, what we are
looking for with this characterization is to use $\mathbf{Stoch}$ as an
interface to the theory that enables us to make statistical predictions in a
manner coherent with its compositional structure, and where we can characterize
the objects by the statistics that they can generate.

The additional features that an SMC has to satisfy in order to be a causal GPT
are:
\begin{enumerate}
	\item The SMC contains \textbf{$\mathbf{Stoch}$ as a full subtheory}.
	\item There is a \textbf{convex structure} compatible with the one from
	      $\mathbf{Stoch}$.
	\item There is a notion of \textbf{tomography}.
	\item There is a \textbf{unique effect} associated to each system type.
\end{enumerate}

In this manuscript we further focus on	GPTs that satisfy the following
additional property:
\begin{enumerate}
	\item[5.] The theory is \textbf{locally tomographic}.
\end{enumerate}

We now discuss each of those points in turn.

\subsubsection{\textbf{$\mathbf{Stoch}$ is a full subtheory}}

This means that all of the systems from $\mathbf{Stoch}$ and all of the
processes from $\mathbf{Stoch}$ are also in the GPT, and, moreover, that when
we compose these systems and processes in the GPT this matches the composition
in $\mathbf{Stoch}$ \cite{gogioso2017categorical}. Moreover, if we have a
process in the GPT which only has inputs and outputs coming from
$\mathbf{Stoch}$, then this must  be a process coming from $\mathbf{Stoch}$.

The importance of that, is that inside a GPT, we can take the maps that go from
a classical system (i.e. a system interpreted as a system of $\mathbf{Stoch}$
to another one as a stochastic process. Then, these processes, with all their
internal probabilities, provide a probabilistic interpretation to the diagrams.
Note that if it were not a \emph{full} subtheory, then there would necessarily
be situations in which the theory failed to make sensible probabilistic
predictions, for example, giving negative probabilities for measurement
outcomes.

For example, suppose we have a state of some general system in the GPT, $A$,
then a (destructive) measurement for $A$ would be a process with $A$ as an
input and some system $X$ in $\mathbf{Stoch}$ as an output, when we compose
these we are left with a process which must be a state in $\mathbf{Stoch}$,
namely, a probability distribution. It is precisely these probability
distributions which encode the probabilistic predictions of the GPT.

We denote the systems coming from the subtheory $\mathbf{Stoch}$ as:
\begin{equation}
\begin{tikzpicture}
	\begin{pgfonlayer}{nodelayer}
		\node [style=none] (0) at (0, 0.75) {};
		\node [style=none] (1) at (0, -0.75) {};
		\node [style=right label] (2) at (0, -0.5) {$X$};
	\end{pgfonlayer}
	\begin{pgfonlayer}{edgelayer}
		\draw [cWire] (0.center) to (1.center);
	\end{pgfonlayer}
\end{tikzpicture}
}\ ,
\end{equation}
where we use a thin gray wire to distinguish the systems in the subtheory from
generic systems in the GPT.

Note that in many other approaches to GPTs, the probabilities are encoded as
scalars in the theory. In the approach we take here this is not the case, as,
in particular, we find here that there is a unique scalar, the number $1$.
Instead, we obtain probability distributions over measurement outcomes via the
states of the subtheory, $\mathbf{Stoch}$. For example, this is what we obtain
when we compose a state of a generic system in sequence with a measurement on
that system.

\subsubsection{\textbf{Convex structure}} \label{sec:conv}
In order to naturally express statistical mixtures in the GPTs, we require them
to be closed under convex mixtures of processes of matching input and output
types. We require further that this composition is consistent with the convex
composition from $\mathbf{Stoch}$ \cite{gogioso2017categorical}. To start
illustrating that, note that if we have $f:A \to B$ and $g: A \to B$, there
must exist some $pf+(1-p)g: A \to B$ in the theory where we denote this as
\begin{equation}\label{}
	p\ %
\InputIfFileExists{Diagrams/f.tikz}{}{\input{./figures/Diagrams/f.tikz}} + (1-p)\ %
\InputIfFileExists{Diagrams/g.tikz}{}{\input{./figures/Diagrams/g.tikz}} \ \ =\ \
\InputIfFileExists{Diagrams/fpg.tikz}{}{\input{./figures/Diagrams/fpg.tikz}}\,.
\end{equation}
Note that these combinations are allowed only when the input/output systems are
the same for each of the combined processes. Moreover, these must distribute
over diagrams, that is, they must satisfy, for example:
\begin{equation}\label{}
\InputIfFileExists{Diagrams/sum1.tikz}{}{\input{./figures/Diagrams/sum1.tikz}}\ \ =\ \ %
\InputIfFileExists{Diagrams/sum2.tikz}{}{\input{./figures/Diagrams/sum2.tikz}} \,.
\end{equation}
Finally, these convex combinations must match up with the standard notion of
convex-combinations when specialised to the subtheory $\mathbf{Stoch}$. This
ensures that we can consistently view these convex combinations as describing
our classical uncertainty about which process is happening.

\subsubsection{\textbf{Tomography}}\label{sec:tomog}
The next requirement that a GPT must satisfy is to have a notion of tomography
\cite{hardy2012limited}. What that means is that we should be able to
characterize its elements -- i.e., the states, effects and transformations --
by the statistics that they are capable of generating. In this way, an
experimentalist would be able to characterize the theoretical objects
describing their experiment by connecting the statistics to the probabilities
that the theory predicts.

To have a notion of tomography of processes, we need to always be able to
establish equalities between them by looking at the statistics that they can
generate. In a GPT, this means the following: we require that if it is the case
that whenever we swap the process $f: A \to B$ by the process $g: A \to B$ in
any diagram that represents a stochastic map, that map is kept unchanged, then
it must be that $f = g$. That is,
\begin{equation}\label{eq:16}
\InputIfFileExists{Diagrams/f.tikz}{}{\input{./figures/Diagrams/f.tikz}}\ \ =\ \ %
\InputIfFileExists{Diagrams/g.tikz}{}{\input{./figures/Diagrams/g.tikz}}\quad \iff\quad
	\forall \tau,\ X,\ Y, \ \ %
\InputIfFileExists{Diagrams/tom1.tikz}{}{\input{./figures/Diagrams/tom1.tikz}}\ \  =\ \
\InputIfFileExists{Diagrams/tom2.tikz}{}{\input{./figures/Diagrams/tom2.tikz}}\,.
\end{equation}
Here we are using $\tau$ to represent an arbitrary diagram that, after
inserting $f$ in some specific spot thereof, has only classical inputs and
outputs left, and so is a process in $\mathbf{Stoch}$. Note that this includes
the case where any of the input/output wires of $\tau$ are the trivial system,
because the trivial system is a classical (that is, $\mathbf{Stoch}$) system.
This condition can be phrased in the following way: two processes $f$ and $g$
from $A$ to $B$ are equal (left hand side of Eq.~\eqref{eq:16}) if and only if
they are \textit{operationally equivalent} (right hand side of
Eq.~\eqref{eq:16}).

\subsubsection{\textbf{Causality}}\label{sec:discard}
In this work we are interested in GPTs that are causal
\cite{chiribella2010probabilistic,coecke2014terminality}. By that, we mean that
for each system type $A$, there is a unique effect that we can think of as
discarding, or simply ignoring, a given system. This property is called
causality because it can be used to impose compatibility of the GPT with a
relativistic causal structure \cite{kissinger2017equivalence}. When the theory
satisfies causality, we use a special diagram to denote the unique (for each
system $A$) discarding effect:
\begin{equation}\label{}
\begin{tikzpicture}
	\begin{pgfonlayer}{nodelayer}
		\node [style=upground] (10) at (0, 0.5) {};
		\node [style=none] (11) at (0, 0.25) {};
		\node [style=none] (12) at (0, -0.5) {};
		\node [style=label] (13) at (-0.25, -0.25) {$A$};
	\end{pgfonlayer}
	\begin{pgfonlayer}{edgelayer}
		\draw [qWire] (11.center) to (12.center);
	\end{pgfonlayer}
\end{tikzpicture}
} \,.
\end{equation}
Note that the uniqueness of the discarding effects is given for each fixed
system type. In particular, this means that for composite systems the
discarding is obtained by parallel composition of the discarding of the
subsystems:
\begin{equation}\label{}
\InputIfFileExists{Diagrams/discardab1.tikz}{}{\input{./figures/Diagrams/discardab1.tikz}}\ \ =\ \ %
\InputIfFileExists{Diagrams/discardab2.tikz}{}{\input{./figures/Diagrams/discardab2.tikz}} \,.
\end{equation}
The discarding effects will be used in the next section to define
non-signalling channels for a general GPT, just like the trace is in quantum
theory.

The fact that there is a unique effect immediately means that all of the
processes are discard-preserving \cite{coecke2014terminality}:
\begin{defn}[Deterministic, or Discard-Preserving,
		Process]\label{deterministic}
	A process $f: A \to B$ is deterministic if it is discard preserving,
	that is,
	\begin{equation}\label{}
\InputIfFileExists{Diagrams/dispres.tikz}{}{\input{./figures/Diagrams/dispres.tikz}}\ \ =\ \
}\,.
	\end{equation}
\end{defn}
In quantum theory, since discarding is the trace operation, this corresponds to
the trace-preserving property. That is, the formalism that we are using here is
the analogue of working with only CPTP maps rather than working with CPTNI
maps. Typically, CPTNI maps are used to describe the potential outcomes of some
measurement, we can instead equally well work only with CPTP maps, by instead
considering all possible outcomes at once, and keeping track of which outcome
occurred by means of an auxiliary classical system.

\subsubsection{\textbf{Local tomography}} \label{sec:localtomog}
In this work, we are interested in GPTs that satisfy a stricter notion of
tomography. We require that the tomography of the processes can be done by
evaluating the probabilities produced by local effects, that is, we require our
GPT to satisfy local tomography \cite{hardy2001quantum}. This is expressed
diagrammatically by the following:
\begin{equation}\label{}
\InputIfFileExists{Diagrams/f.tikz}{}{\input{./figures/Diagrams/f.tikz}}\ \ =\ \ %
\InputIfFileExists{Diagrams/g.tikz}{}{\input{./figures/Diagrams/g.tikz}}\quad \iff\quad
	\forall
	s,\ M,\ Y,\ \ %
\InputIfFileExists{Diagrams/ltom1.tikz}{}{\input{./figures/Diagrams/ltom1.tikz}}\ \ =\ \ %
\InputIfFileExists{Diagrams/ltom2.tikz}{}{\input{./figures/Diagrams/ltom2.tikz}}
\end{equation}
where $s$ is an arbitrary state of $A$, $Y$ is an arbitrary classical system,
and $M$ is an arbitrary measurement of $B$.
Note that this is taking a particular, less general, shape for $\tau$ in the
definition of tomography.

\begin{rem}[]
	A very convenient fact about  locally-tomographic  GPTs is that they
	are
	all subtheories of  $\mathds{R}\mathbf{Linear}$
	\cite{hardy2011reformulating,chiribella2016quantum,schmid2020structure}
	(Example \ref{def:RLin} in Appendix \ref{ap:SMC}),  in the sense that
	all of
	the processes of the former are in the latter (or more rigorously,
	there is an
	injective map between their processes and system types), and they
	compose
	according to $\mathds{R}\mathbf{Linear}$ compositional rules.  This
	will come
	in handy, as in our construction we will use the fact that our GPT is
	one of
	$\mathds{R}\mathbf{Linear}$'s subtheories to write its processes in a
	mathematically concrete way. In particular, both classical and quantum
	theory
	satisfy local tomography, and therefore are  also  subtheories of
	$\mathds{R}\mathbf{Linear}$.
\end{rem}

\bigskip

Now that we  are done discussing the structure of the generalised
probabilistic theories, we can proceed and focus on the properties of the
processes that we are interested in investigating inside those theories.
Namely, we can talk about the non-signalling channels.

\section{Channels in generalised probabilistic theories}\label{nschannels}

In this section we discuss, in the context of generalised probabilistic
theories, the two classes of channels of interest for this paper: the
non-signalling channels, and the common-cause channels (which form a subset of
the non-signalling channels, as we will see).

\subsection{Non-signalling channels}
A practical starting point to understand  what	non-signalling channels in GPTs
are  is to  remind ourselves of  what they are in quantum or classical theory.

Quantum channels are formally completely-positive trace-preserving maps on
density matrices, and specify ways in which quantum systems can be transformed.
The properties of quantum channels are widely studied in the literature
\cite{nielsen2000quantum}, and of particular interest are the quantum channels
that satisfy a form of the no-signalling principle \cite{popescu1994quantum},
introduced first by Beckman, Gottesman, Nielsen, and Preskill
\cite{beckman2001causal} in bipartite setups. These non-signalling  quantum
channels are sometimes referred to as ``causal channels''
\cite{schumacher2005locality}, and do not permit superluminal quantum (nor
classical) communication between two parties -- i.e., two wings of the
experiment. Non-signalling channels were discussed in the context of
multipartite setups by Schumacher and Westmoreland
\cite{schumacher2005locality}.

In general theories -- not necessarily quantum or classical -- one can also
define the concept of a channel as a transformation in the theory that is
discard-preserving (Def.~\ref{deterministic} ), that is, one that preserves, on
any state, the result of the application of the discarding process. In this
context, we can talk about the property of a channel being non-signalling.
In this section, we present a convenient definition of non-signalling channels
in the diagrammatic language that we presented in Sec.~\ref{GPTs}.
Specifically, we want to diagrammatically represent the idea that no
information can flow between the parties.
Consider, for example, a bipartite process $\Lambda: A \otimes B \to C \otimes
	D$. If by discarding system $C$ the resulting process $A \otimes B \to
	D$ is
such that changing system $A$ does not produce any changes in system $D$, then
$\Lambda$ cannot signal from the $AC$ wing of the experiment to the $BD$ wing
of the experiment. In other words, we say that $ \Lambda: A \otimes B \to C
	\otimes D$ is non-signalling from $AC$ to $BD$ if and only if
\begin{equation}\label{}
\InputIfFileExists{Diagrams/lambdaab.tikz}{}{\input{./figures/Diagrams/lambdaab.tikz}}\ \ =\ \ %
\InputIfFileExists{Diagrams/lambdaabb.tikz}{}{\input{./figures/Diagrams/lambdaabb.tikz}}  ,
\end{equation}
where $ \Lambda_{b} : B \to D$ is a valid channel within the theory
\cite{coecke2014terminality}. Note, in particular, that this implies that the
application of any deterministic process (Def. \ref{deterministic}) in the $AC$
wing does not change the marginal channel $\Lambda_{b}$:
\begin{equation}\label{}
\InputIfFileExists{Diagrams/flambdaab.tikz}{}{\input{./figures/Diagrams/flambdaab.tikz}}\ \ =\ \ %
\InputIfFileExists{Diagrams/flambdaabb.tikz}{}{\input{./figures/Diagrams/flambdaabb.tikz}}\ \
	=\ \ %
\InputIfFileExists{Diagrams/lambdaabbp.tikz}{}{\input{./figures/Diagrams/lambdaabbp.tikz}} \,,
\end{equation}
hence, no information can flow from the $AC$ wing to the $BD$ wing of the
experiment. A channel is then said to be non-signalling when it satisfies that
property in both directions between the wings of the experiment.

So far we have presented the case of bipartite non-signalling channels, but the
notion of a multipartite non-signalling channel has also been defined in the
literature \cite{schumacher2005locality}. Here we present a convenient
diagrammatic definition of multipartite non-signalling channels. In order to
define the multipartite generalisation of this condition we need a convenient
way to represent discarding an arbitrary subset of the outputs. To see why,
suppose that $ \Lambda$ is a tripartite channel. If we want to guarantee that
no information can flow from any of the subsystems to any other, we need to
have that
\begin{eqnarray}
\InputIfFileExists{Diagrams/NSTri1.tikz}{}{\input{./figures/Diagrams/NSTri1.tikz}}\ \ &= %
\InputIfFileExists{Diagrams/NSTri2.tikz}{}{\input{./figures/Diagrams/NSTri2.tikz}},\qquad
\InputIfFileExists{Diagrams/NSTri3.tikz}{}{\input{./figures/Diagrams/NSTri3.tikz}}\ \ =\ \ %
\InputIfFileExists{Diagrams/NSTri4.tikz}{}{\input{./figures/Diagrams/NSTri4.tikz}}, \qquad
\InputIfFileExists{Diagrams/NSTri5.tikz}{}{\input{./figures/Diagrams/NSTri5.tikz}}\ \ =\ \ %
\InputIfFileExists{Diagrams/NSTri6.tikz}{}{\input{./figures/Diagrams/NSTri6.tikz}},\\
	&%
\InputIfFileExists{Diagrams/NSTri7.tikz}{}{\input{./figures/Diagrams/NSTri7.tikz}}\ \ = %
\InputIfFileExists{Diagrams/NSTri8.tikz}{}{\input{./figures/Diagrams/NSTri8.tikz}},\quad ...
\end{eqnarray}
and so on. It is easy to see that this can become quite complex quickly as we
increase the number of parties. In order to capture this in a succinct
diagrammatic form, we need a notation which allows us to describe discarding an
arbitrary subset of the outputs (or inputs), for this purpose we first
introduce a bipartitioning processes as follows:

\begin{defn}[Bipartitioning processes $B(K)$]
	Given a set $M = \{1,...,m\}$ take a labelled subset $K = \{ k_{1},
		...,
		k_{n}\} \subset M$ and its complement $ \overline{K} = \{
		\overline{k}_{1},...,
		\overline{k}_{n'}\} = M \setminus K$, where $n+n'=m$.  Then,
	the
	bipartitioning process $B(K)$ is the permutation  which takes
	$(1,...,m)$ to
	$(k_{1},...,k_{n}, \overline{k}_{1},..., \overline{k}_{n'})$.
	Diagrammatically, we represent this by
	\begin{equation}\label{}
		B(M|K)\ \ \doteq\ \ %
\InputIfFileExists{Diagrams/pK.tikz}{}{\input{./figures/Diagrams/pK.tikz}} \,.
	\end{equation}
	where we are using numbers, instead of system type names, to refer to
	the wires
	for the sake of clarity.
	\hfill	$\CIRCLE$
\end{defn}
For example, if we take $M = \{1,2,3\}$ and $K = \{2,3\}$, or $M' =
	\{1,2,3,4\}$ and $K' = \{1,4\}$ then we have, respectively,
\begin{equation}\label{}
\InputIfFileExists{Diagrams/pK2.tikz}{}{\input{./figures/Diagrams/pK2.tikz}}\ \ =\ \ %
\InputIfFileExists{Diagrams/pK3.tikz}{}{\input{./figures/Diagrams/pK3.tikz}} \quad \text{and}
	\quad %
\InputIfFileExists{Diagrams/pK4.tikz}{}{\input{./figures/Diagrams/pK4.tikz}}\ \ =\ \ %
\InputIfFileExists{Diagrams/pK5.tikz}{}{\input{./figures/Diagrams/pK5.tikz}}.
\end{equation}

We can then use this bipartitioning operation to concisely notate discarding
some subset$K$ of the outputs $M$ of a channel $\Lambda$, i.e.,
\begin{equation}\label{}
\InputIfFileExists{Diagrams/ptrace.tikz}{}{\input{./figures/Diagrams/ptrace.tikz}} \,,
\end{equation}
which in quantum theory would represent the partial trace $\tr_{K} (\Lambda)$,
up to a permutation of the surviving systems. For example, in the tripartite
case we can represent discarding the second and third outputs by
\begin{equation}\label{}
\InputIfFileExists{Diagrams/ptrace2.tikz}{}{\input{./figures/Diagrams/ptrace2.tikz}}\ \ =\ \ %
\InputIfFileExists{Diagrams/ptrace3.tikz}{}{\input{./figures/Diagrams/ptrace3.tikz}}\ \ =\ \
\InputIfFileExists{Diagrams/ptrace4.tikz}{}{\input{./figures/Diagrams/ptrace4.tikz}}.
\end{equation}

We can now present the definition of multipartite non-signalling channels in a
succinct  diagrammatic form:
\begin{defn}[Non-signalling channel]\label{nsdef}
	An m-partite channel $ \Lambda: \otimes_{i=1}^{m} i \to \otimes_{i' =
			1}^{m} i' $ is non-signalling iff for all labelled
	subsets $ K \subset
		\{1,...,m\}$, there exists a channel $ \Lambda_{ \overline{K}}
		:  \otimes_{i =
			1}^{n'} \overline{k}_{i} \to \otimes_{ i=1 }^{n'}
		\overline{k}_{i}' $, with
	$\overline{K} = \{1, \ldots,m\} \setminus K$,
	such that
	\begin{equation}\label{eq:NS25}
\InputIfFileExists{Diagrams/NS1.tikz}{}{\input{./figures/Diagrams/NS1.tikz}}\ \ =\ \ %
\InputIfFileExists{Diagrams/NS2.tikz}{}{\input{./figures/Diagrams/NS2.tikz}}.
	\end{equation}
	\hfill	$\CIRCLE$
\end{defn}
To illustrate this, one of the conditions that this definition would impose on
the tripartite case ($M = \{1,2,3\}$) would be for $K = \{2,3\}$, which would
give
\begin{eqnarray}
\InputIfFileExists{Diagrams/NSExNew1.tikz}{}{\input{./figures/Diagrams/NSExNew1.tikz}}\ \ &=\ \ %
\InputIfFileExists{Diagrams/NSEx2.tikz}{}{\input{./figures/Diagrams/NSEx2.tikz}}\ \ =\ \
\InputIfFileExists{Diagrams/NSEx1.tikz}{}{\input{./figures/Diagrams/NSEx1.tikz}}\\ &= \ \ %
\InputIfFileExists{Diagrams/NSEx5.tikz}{}{\input{./figures/Diagrams/NSEx5.tikz}}\ \  =\ \
\InputIfFileExists{Diagrams/NSEx4.tikz}{}{\input{./figures/Diagrams/NSEx4.tikz}}\ \ = \ \  %
\InputIfFileExists{Diagrams/NSExNew2.tikz}{}{\input{./figures/Diagrams/NSExNew2.tikz}}
\end{eqnarray}
that is, we can see explicitly how our condition gives us no signalling from
$2\otimes 3$ to $1$. It is straightforward to similarly verify that the other
conditions in the tripartite case are recovered by varying over the subsets
$K\subseteq M$.

Notice that this definition of a non-signalling channel treats each pair of
input/output systems $(i,i')$ as a different wing of the experiment. Therefore,
when specifying the experimental scenario and the channel $\Lambda$ the systems
should be represented via `one wire per wing'. As an example, consider the case
where one wing of the experiment consists of  two qubits forming a
4-dimensional quantum system as an input: then this must be represented by one
4-dimensional system -- rather than by two wires representing two qubits --
when Definition \ref{nsdef} is applied, since signalling is allowed between the
wing's	internal  two qubits.

\subsection{Common cause channels}

To formally state the question tackled in this paper, we first need to	specify
the notion of a common-cause channel  that we use in this manuscript.	Broadly
speaking, the  common-cause channels are a subset of the non-signalling
channels. Namely, we say that a channel is common-cause if, in the GPT of
interest, it can be constructed by the parties via the application of some
local operations to a shared multipartite state.
A good example of such a channel is the one obtained in a Bell experiment,
where, for example, Alice and Bob each make measurements on their shares of a
Bell state. One can view the result of the Bell experiment as being a bipartite
classical channel which is realised by local operations on a shared quantum
state, i.e., a quantum common-cause.

Based on this example, we can define the notion of a common-cause decomposition
within a given GPT $\mathbf{G}$.
\begin{defn}[Common-cause decomposition]\label{ccddef}
	Let $ \Lambda$ be a channel in a given GPT $\mathbf{G}$. $ \Lambda$
	admits of
	a \textit{common-cause decomposition} if there are $N$ systems
	$\{1''\,,\,
		\ldots \,,\, N''\}$ from $\mathbf{G}$, a state $s$ in the state
	space of the
	multipartite system $1''\,,\, \ldots \,,\, N''$ and a collection
	$\{T_{i}\}_{i=1\ldots N}$ of  transformations in $\mathbf{G}$, such
	that
	\begin{equation}\label{cc}
\InputIfFileExists{Diagrams/lambda1n.tikz}{}{\input{./figures/Diagrams/lambda1n.tikz}}\ \ =\ \
\InputIfFileExists{Diagrams/lambda1nd.tikz}{}{\input{./figures/Diagrams/lambda1nd.tikz}}.
	\end{equation}
	\hfill	 $\CIRCLE$
\end{defn}
One can compare this formal diagrammatic definition to the conceptual
definition to see that indeed the idea of construction by local operations (the
transformations $T_i$) on a shared common cause (the state $s$) is indeed
captured by this diagram.

Now, the idea of common-cause decomposition within a GPT might not be enough if
one is considering the possible existence of some hypothetical cause that might
not be modeled by the GPT under consideration. In particular, this is precisely
the kind of situation that is considered in the resource theories of
Refs.~\cite{cowpie,schmid2020postquantum}. In such cases, the more appropriate
question is not whether $\Lambda$ can be realised with a common cause in
$\mathbf{G}$, but whether or not there exists a theory $\mathbf{G}'$ in which
it can be realised with a common cause.
Going back to our example of the Bell experiment, if we violate a Bell
inequality, then we know that the resulting channel cannot be realised via
common case within $\mathbf{Stoch}$, but it can be realised via a quantum
common cause, that is, within the quantum GPT, $\mathbf{Quant}$.

For that purpose, we  define the notion of \textit{GPT-common-cause
	realisable}, by asking whether the common-cause decomposition of $
	\Lambda$
exists in \textit{any} GPT.
\begin{defn}[GPT-Common-cause realisable channel]\label{commoncausedef}
	Let $ \Lambda$ be a channel in a given GPT $\mathbf{G}$. $\Lambda$ is
	\textit{GPT-common-cause realisable} if there exists a GPT
	$\mathbf{G}'$ which
	contains $\mathbf{G}$ as a full subtheory, $N$ systems $\{1''\,,\,
		\ldots \,,\,
		N''\}$ from $\mathbf{G}'$, a state $s$ in the state space of
	the multipartite
	system $1''\,,\, \ldots \,,\, N''$ in $\mathbf{G}'$, and a collection
	$\{T_{i}\}_{i=1\ldots N}$ of  transformations in $\mathbf{G}'$, such
	that
	\begin{equation}\label{commoncause}
\InputIfFileExists{Diagrams/lambda1n.tikz}{}{\input{./figures/Diagrams/lambda1n.tikz}}\ \ =\ \
\InputIfFileExists{Diagrams/lambda1nd2.tikz}{}{\input{./figures/Diagrams/lambda1nd2.tikz}},
	\end{equation}
	where we changed the colors of the $i''$ wires to stress the fact that
	they
	can be present only in the hypothetical GPT $\mathbf{G}'$, whilst the
	wires $i$
	and $i'$ are required to live in the original subtheory $\mathbf{G}$.
	\hfill	 $\CIRCLE$
\end{defn}

Common-cause realisable channels are well known to be non-signalling, here we
present this result using the diagrammatic notation that we have set up so far.
\begin{prop}\label{cccns}
	Any GPT-common-cause realisable channel is non-signalling.
\end{prop}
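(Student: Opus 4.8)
The plan is to take an arbitrary GPT-common-cause decomposition of $\Lambda$ and verify, bipartition by bipartition, the defining equation of a non-signalling channel, the key leverage being the causality of the enveloping theory $\mathbf{G}'$. Concretely, suppose $\Lambda$ is GPT-common-cause realisable, so that by Definition~\ref{commoncausedef} there is a GPT $\mathbf{G}'$ containing $\mathbf{G}$ as a full subtheory, a shared state $s$ on the common-cause systems $1'',\ldots,N''$, and local transformations $\{T_i\}$ (one per party, each taking the input $i$ and the share $i''$ to the output $i'$) satisfying~\eqref{commoncause}. Fix an arbitrary labelled subset $K\subset\{1,\ldots,m\}$ with complement $\overline{K}$; the goal is to exhibit a channel $\Lambda_{\overline{K}}$ witnessing the non-signalling condition~\eqref{eq:NS25} for this $K$.

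First I would substitute the common-cause decomposition into the left-hand side of~\eqref{eq:NS25}, that is, into $\Lambda$ with the outputs indexed by $K$ discarded. The crucial observation is that, since $\mathbf{G}'$ is a causal GPT, it carries a unique discarding effect on each system, and hence \emph{every} transformation in $\mathbf{G}'$ is discard-preserving in the sense of Definition~\ref{deterministic}; in particular each local map $T_i$ is. Applying discard-preservation to the transformations $T_i$ for $i\in K$ converts ``discard the output $i'$ of $T_i$'' into ``discard the inputs of $T_i$'', namely the GPT input $i$ together with its share $i''$ of the common cause. This is the step that decouples the discarded wing from the rest of the diagram.

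Next I would gather the freed-up common-cause shares: discarding the systems $i''$ for $i\in K$ from $s$ yields a marginal state $s_{\overline{K}}$ on $\{j'':j\in\overline{K}\}$, itself a legitimate state of $\mathbf{G}'$. Reading off the surviving diagram, what remains is the discarding of the inputs $i$ for $i\in K$, sitting in parallel with the common-cause decomposition built from $s_{\overline{K}}$ together with $\{T_j\}_{j\in\overline{K}}$. Defining $\Lambda_{\overline{K}}$ to be this latter process, it has only the inputs $j$ and outputs $j'$ with $j\in\overline{K}$; it is a channel because it is a composite of a state with discard-preserving transformations, and, having only $\mathbf{G}$-type external wires while being assembled from $\mathbf{G}'$-processes, it lies in $\mathbf{G}$ by the full-subtheory property. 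This is exactly the right-hand side of~\eqref{eq:NS25}, so the non-signalling equation holds for $K$. Since the argument imposed no constraint on $K$, running it over every labelled subset establishes that $\Lambda$ is non-signalling in both (indeed all) directions.

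As for difficulty, there is no deep obstacle: the proof is a routine diagrammatic manipulation. The one point that does the real work is the appeal to causality of $\mathbf{G}'$, which forces each local transformation $T_i$ to be discard-preserving and thereby licenses pushing the discarding effect through the $T_i$ for $i\in K$ so as to disconnect that wing; the only other ingredient is the full-subtheory property, used to certify that the resulting marginal $\Lambda_{\overline{K}}$ genuinely belongs to $\mathbf{G}$. Everything else is bookkeeping of wires across the bipartition $K\mid\overline{K}$.
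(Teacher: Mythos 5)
Your proposal is correct and follows essentially the same route as the paper's proof: substitute the common-cause decomposition into the non-signalling condition, use causality of $\mathbf{G}'$ (unique discarding effects, hence discard-preservation of the $T_i$) to push the discards for $i\in K$ through onto the inputs $i$ and the shares $i''$, absorb the discarded shares into a marginal of $s$, and invoke the full-subtheory property to certify that the resulting $\Lambda_{\overline{K}}$ is a valid channel in $\mathbf{G}$. No gaps; this is precisely the argument in Eq.~\eqref{ccnsproof}.
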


\begin{proof}
	Consider a fixed but arbitrary channel $\Lambda$ in a GPT $\mathbf{G}$.
	Let
	$\mathbf{G}'$ be the GPT that provides the common-cause realisation of
	$\Lambda$. First, notice that, because in Eq.~\ref{eq:NS25} the $ T_{
				i} $
	channels are discard-preserving, if we take $ \Lambda$ to be decomposed
	as in
	Eq.~\ref{commoncause}, we get
	\begin{equation}\label{ccnsproof}
		\begin{aligned}
\InputIfFileExists{Diagrams/CCProof1.tikz}{}{\input{./figures/Diagrams/CCProof1.tikz}}\ \	 & =\ \
\InputIfFileExists{Diagrams/CCProof2.tikz}{}{\input{./figures/Diagrams/CCProof2.tikz}}\ \
			=\ \ %
\InputIfFileExists{Diagrams/CCProof3.tikz}{}{\input{./figures/Diagrams/CCProof3.tikz}}\ \ =\ \
\InputIfFileExists{Diagrams/CCProof4.tikz}{}{\input{./figures/Diagrams/CCProof4.tikz}}                        \\
			                               & \hspace{-2cm}=\ \
\InputIfFileExists{Diagrams/CCProof4_2.tikz}{}{\input{./figures/Diagrams/CCProof4_2.tikz}}\ \ =\ \
\InputIfFileExists{Diagrams/CCProof5_2.tikz}{}{\input{./figures/Diagrams/CCProof5_2.tikz}}\ \ =:\ \
\InputIfFileExists{Diagrams/CCProof6.tikz}{}{\input{./figures/Diagrams/CCProof6.tikz}}.
		\end{aligned}
	\end{equation}
	where $ \Lambda_{ \overline{K}} $, the channel defined by combining the
	elements within the dashed box, must be a valid channel from
	$\mathbf{G}$
	because all of its inputs and outputs are from $\mathbf{G}$, and
	$\mathbf{G}$
	is assumed to be a full subtheory of $\mathbf{G}'$.
\end{proof}

The main aim of the paper is hence  to explore	the converse direction to
Proposition \ref{cccns}, namely, whether non-signalling channels can  in
general  be common-cause realisable. The first observation to make is the
well-known fact that the non-signalling classical channel known as
Popescu-Rohrlich (PR) box\footnote{The PR box can be thought of as a
	non-signalling classical channel that takes classical systems to
	classical
	systems.} \cite{popescu1994quantum} does not have a common-cause
realisation
within classical theory \cite{Bell64}, but it does have one such realisation
within the GPT known as Boxworld \cite{barrett2007gpt}. In this sense, hence,
we say that the classical GPT is \emph{common-cause incomplete}. Moreover, we
further view Boxworld as adding extra common causes to classical theory, and so
can be thought of as a \textit{common-cause completion} of classical theory.
This discussion motivates the following definition:
\begin{defn}[Common-cause complete GPT]
	A GPT is said to be \textit{common-cause complete} if a common-cause
	decomposition can be found for each of its non-signalling channels
	within the
	theory. That is, given a non-signalling channel $ \Lambda$ in the GPT,
	we can
	decompose it as in Definition \ref{commoncausedef} taking
	$\mathbf{G}'=\mathbf{G}$. \hfill $\CIRCLE$
\end{defn}

The previous observation shows that there are some GPTs -- such as classical
and quantum theory -- which are not common-cause complete. However, classical
theory does have a common-cause completion.
The question we therefore ask is whether or not this is generic? That is:
\begin{center}
	Given some GPT $\mathbf{G}$, can we find a common-cause completion
	$\mathbf{G}'$ such that all of the non-signalling channels of
	$\mathbf{G}$ have
	a GPT common-cause realisation in $\mathbf{G}'$
	(Def.~\ref{commoncausedef})?
\end{center}
Formally, we defined the common-cause completion as follows:
\begin{defn}[Common-cause completion]
	A GPT $ \mathbf{G}'$ is a common-cause completion of a GPT $
		\mathbf{G}$ if $\mathbf{G}$ is a subtheory of $\mathbf{G}'$,
	and $\mathbf{G}'$
	contains a common-cause decomposition (as per Definition
	\ref{commoncausedef})
	of all of the non-signalling channels of $\mathbf{G}$.
	Note that this definition does not require $\mathbf{G}'$ to be
	common-cause complete  itself.
	\hfill $\CIRCLE$
\end{defn}
In the following section we show that any tomographically-local GPT does indeed
have a common-cause completion.

\section{Common-cause completion}\label{results}

In this section, we provide a construction $\mathcal{C}$ which takes an
arbitrary locally-tomographic causal GPT $\mathbf{G}$ into a common-cause
completion thereof, $\mathcal{C}[\mathbf{G}]$. The starting point of our
construction relies on the following Lemma, proven in
Ref.~\cite{cavalcanti2022decomposing}.
\begin{lem}(Affine common-cause decomposition of non-signalling channels
	\cite{cavalcanti2022decomposing})\label{hermitianstates}
	In a locally-tomographic GPT $\mathbf{G}$, any $m$-partite
	non-signalling
	channel, $\Lambda $, can be written as:
	\begin{equation}\label{lemma1eq}
\InputIfFileExists{Diagrams/NoSignallingChannel.tikz}{}{\input{./figures/Diagrams/NoSignallingChannel.tikz}} \ \ = \ \
\InputIfFileExists{Diagrams/NoSignallingChannelDecomposition.tikz}{}{\input{./figures/Diagrams/NoSignallingChannelDecomposition.tikz}}
	\end{equation}
	where $ \tilde{\eta}^\Lambda_i$ are discard-preserving processes in
	$\mathbf{G}$, and $\tilde{\xi}^\Lambda$ is an affine combination of
	states from
	$\mathbf{G}$ (e.g., when $\mathbf{G}$ is quantum theory,
	$\tilde{\xi}^\Lambda$
	is a unit trace Hermitian operator). Note that we have drawn
	$\tilde{\xi}^{\Lambda}$ as a black box to indicate that, whilst it is a
	mathematically valid object, it is not necessarily  a physical process
	within
	the GPT $\mathbf{G}$\footnote{That is, when we view the tomographically
		local
		theory as a subtheory of $\mathds{R}\mathbf{Linear}$ we can
		then take arbitrary
		affine (or more generally linear) combinations and have a well
		defined process
		in $\mathds{R}\mathbf{Linear}$ but this then could be outside
		of the subtheory
		$\mathbf{G}$.}.
\end{lem}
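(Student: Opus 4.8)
My plan is to factor the statement into a purely linear-algebraic ``quasiprobabilistic'' decomposition followed by a repackaging that produces the common-cause shape of Eq.~\eqref{lemma1eq}. The intermediate claim I would prove first is that every $m$-partite non-signalling channel $\Lambda$ in $\mathbf{G}$ admits a decomposition
\[
\Lambda \ =\ \sum_{\lambda} p_{\lambda}\,\bigotimes_{i=1}^{m} T_{i}^{\lambda},
\qquad \sum_{\lambda} p_{\lambda}=1,
\]
into genuine local channels $T_{i}^{\lambda}\colon i\to i'$ of $\mathbf{G}$ with real (possibly negative) weights $p_{\lambda}$. Granting this, the rest is bookkeeping: I would let each auxiliary system $i''$ be a classical ($\mathbf{Stoch}$) register carrying the label $\lambda$, take the shared state $\tilde{\xi}^{\Lambda}=\sum_{\lambda}p_{\lambda}\,\delta_{\lambda}^{\otimes m}$ --- the perfectly-correlated classical quasi-distribution, an affine combination of genuine product states with coefficients summing to one --- and let each local process $\tilde{\eta}^{\Lambda}_{i}$ be the ``read $\lambda$ off $i''$ and apply $T_{i}^{\lambda}$'' controlled channel. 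Because each $T_{i}^{\lambda}$ is discard-preserving and classically-controlled application of channels is again discard-preserving, every $\tilde{\eta}^{\Lambda}_{i}$ is a bona fide channel (Def.~\ref{deterministic}); feeding the correlated label into all wings at once reproduces $\sum_{\lambda}p_{\lambda}\bigotimes_{i}T_{i}^{\lambda}=\Lambda$. In this way the non-positivity forced by the construction is quarantined entirely in the shared affine state, exactly as the lemma demands.

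To prove the quasiprobabilistic decomposition I would pass to $\mathds{R}\mathbf{Linear}$, which is legitimate by local tomography: every composite system is a genuine tensor product, a genuine $m$-partite channel is a linear map lying in $\bigotimes_{i}\mathcal{L}(V_i,V_{i'})$, and we may form arbitrary affine combinations of $\mathbf{G}$'s processes inside $\mathds{R}\mathbf{Linear}$. Writing $W_i\subseteq\mathcal{L}(V_i,V_{i'})$ for the real span of the local channels $i\to i'$, the decomposition is equivalent to the assertion that $\Lambda$ lies in the affine hull $\mathcal{A}$ of the product channels $\{\bigotimes_i T_i: T_i\in\mathrm{Ch}(i,i')\}$. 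I would establish this in two steps. The first is a spanning step: within $\bigotimes_i W_i$ the affine hull $\mathcal{A}$ is exactly the trace-preserving affine subspace. This is the routine half --- since $W_i$ is by definition spanned by channels, the local channels already affinely span the trace-preserving part of each $W_i$, and a short multilinear (polarization) argument promotes this slotwise spanning to the statement that products of channels affinely span the trace-preserving part of the whole tensor product.

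The second, substantive step is to show that a non-signalling channel actually lies in the subspace $\bigotimes_i W_i$ --- note this is a proper subspace of the full process space $\bigotimes_i\mathcal{L}(V_i,V_{i'})$, which also contains signalling channels. The mechanism is that discarding the output $i'$ collapses slot $i$ to a functional on the input $V_i$, and the non-signalling identity of Definition~\ref{nsdef} for the singleton $K=\{i\}$ forces that functional to be exactly the discarding effect $\mathrm{disc}_i$; this kills precisely the components of $\Lambda$ that point outside $W_i$ in slot $i$. Running this over all singletons removes the single-slot ``signalling'' components, and running the subset-indexed conditions of Definition~\ref{nsdef} over all larger $K$ removes the mixed multi-slot components, leaving $\Lambda\in\bigotimes_i W_i$. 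Since $\Lambda$ is also trace-preserving, the spanning step then places it in $\mathcal{A}$, completing the argument.

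The step I expect to be the main obstacle is this second one: organising the family of subset-indexed non-signalling constraints of Definition~\ref{nsdef} into a clean proof that all ``signalling directions'' of $\bigotimes_i\mathcal{L}(V_i,V_{i'})$ are annihilated, so that $\Lambda$ is pinned to $\bigotimes_i W_i$. The natural route is an induction on the number of parties, peeling off one wing at a time using the marginal channels $\Lambda_{\overline{K}}$ guaranteed by non-signalling and invoking the single-party base case; the bookkeeping needed to handle the overlapping conditions for all subsets $K$ simultaneously --- and to keep the reduced objects genuine channels of $\mathbf{G}$ at each stage rather than mere affine combinations --- is where the real work lies. Everything else (the repackaging of the first paragraph and the spanning step) is essentially formal.
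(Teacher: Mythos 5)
The first thing to note is that the paper does not actually prove Lemma~\ref{hermitianstates}: its ``proof'' is the single line citing Theorem 5.1 of Ref.~\cite{cavalcanti2022decomposing}. Your intermediate claim --- the quasiprobabilistic decomposition $\Lambda=\sum_{\lambda}p_{\lambda}\bigotimes_{i}T_{i}^{\lambda}$ with real weights summing to one --- \emph{is} that cited theorem, so you are in effect reproving the imported result and then supplying the common-cause repackaging that the present paper treats as immediate. Your overall architecture is therefore the right one, and your spanning step is fine (indeed easier than you make it: pick a basis of each $W_i$ consisting of channels, expand, and discard-preservation forces the coefficients to sum to $1$). But your second step, which you correctly identify as the substantive one, has a genuine gap. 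You claim the singleton non-signalling condition ``kills precisely the components of $\Lambda$ that point outside $W_i$ in slot $i$.'' What the condition for $K=\{i\}$ in Def.~\ref{nsdef} actually gives is that applying $\mathrm{disc}_{i'}\circ(-)$ to slot $i$ of $\Lambda$ yields $\mathrm{disc}_i\otimes\Lambda_{\overline{\{i\}}}$, hence that $\Lambda$ lies in $D_i\otimes\bigotimes_{j\neq i}\mathcal{L}_j$, where $D_i:=\{\Psi\in\mathcal{L}(V_i,V_{i'}): \mathrm{disc}_{i'}\circ\Psi\propto\mathrm{disc}_i\}$ is the discard-rescaling subspace. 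This subspace \emph{contains} the span $W_i$ of the local channels of $\mathbf{G}$, but nothing in your argument shows it equals it; if $W_i\subsetneq D_i$, non-signalling only pins $\Lambda$ into $\bigotimes_i D_i$, and your spanning step (which needs a channel basis) cannot be invoked. The missing ingredient is the identification $W_i=D_i$, and it needs its own proof: in any causal GPT satisfying tomography, measure-and-prepare channels (a measurement $i\to X$ composed with a preparation $X\to i'$, both legitimate channels of $\mathbf{G}$) already span $D_i$, because tomography forces the measurement functionals to span $V_i^*$ and the states to span $V_{i'}$; a dimension count then gives equality. Without some such argument your step (b) does not close.

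Two further remarks. First, once $W_i=D_i$ is in hand, the multipartite bookkeeping you flag as the main obstacle evaporates: no induction on parties is needed, since the singleton conditions alone give $\Lambda\in W_i\otimes\bigotimes_{j\neq i}\mathcal{L}_j$ for every $i$, and the standard linear-algebra fact
\begin{equation*}
\bigcap_{i}\Bigl(\,W_i\otimes\textstyle\bigotimes_{j\neq i}\mathcal{L}_j\Bigr)\;=\;\bigotimes_{i}W_i
\end{equation*}
finishes the membership step; the conditions for larger subsets $K$ are never used. Second, your repackaging quietly assumes that the classically controlled process ``read $\lambda$ off the register and apply $T_i^{\lambda}$'' is a channel \emph{of} $\mathbf{G}$, as the lemma requires of $\tilde{\eta}_i^{\Lambda}$. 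Classical control is not an obvious consequence of the axioms as listed in Section II (having $\mathbf{Stoch}$ as a full subtheory plus convexity does not by itself let a classical wire steer which channel is applied); it is, however, available in the cone-enriched formalism of Ref.~\cite{gogioso2017categorical} that this paper builds on, where $\sum_{\lambda}e_{\lambda}\otimes T_i^{\lambda}$ is a legitimate (and discard-preserving) process. You should either invoke that framework explicitly or note the assumption, since this is exactly the point where the abstract-GPT setting is weaker than the quantum intuition.
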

\proof Theorem 5.1 of Ref.~\cite{cavalcanti2022decomposing}. \endproof

This lemma, at first glance, provides a  common-cause realisation of any
non-signalling channel. However, these affine combinations of states
$\widetilde{\xi}^\Lambda$ are not (in general) going to be valid states in the
GPT. One route to a solution could therefore be to define a
common-cause-completion by enlarging the state-space so that it now includes
these non-physical states. The problem with this approach, however, is that it
does not necessarily yield a well-defined GPT, since this procedure will often
lead to negative probabilities for measurement outcomes when we start composing
these states in ways other than the diagram described in Eq.~\eqref{lemma1eq}.

In order to prevent negative numbers from arising, then, one can by fiat forbid
certain `undesired' compositions. That is, one needs to equip the produced
theory with \textit{restrictions} on how the processes may be composed --
type-matching conditions would no longer be a sufficient compositional
criterion. Such a theory is, in the language of
Ref.~\cite{barrett2019computational}, called a ``non-free'' GPT as one is not
free to compose processes solely based on their system types. Whilst
mathematically consistent, we find it difficult to justify such restrictions on
physical grounds, and hence we will not pursue its study further in this paper.
In what follows, we instead provide a construction of a valid common-cause
completion map, which, given a causal tomographically local GPT will always
build a valid GPT, where composition precisely follows the GPT rules as per
section \ref{GPTs}.

\subsection{Constructing the $\mathcal{C}$ map}\label{constructing}

Here we define a common-cause completion map, $\mathcal{C}$ which takes an
arbitrary tomographically-local GPT, $\mathbf{G}$, as an input and then
constructs a common-cause completion of it, $\mathcal{C}[\mathbf{G}]$, which is
its output.
The basic idea of this construction is to include all the non-physical states $
	\tilde{\xi}^{\Lambda}$ and $\tilde{\eta}_i^\Lambda$ from Lemma
\ref{hermitianstates}, but now with the caveat that the output systems of each
$ \tilde{\xi}^{\Lambda}$ (and consequently the inputs to the
$\tilde{\eta}_i^\Lambda$) are taken to be new system types which are added to
the theory. It will then be the type matching constraints (which are part of
the basic definition of a GPT) which will prevent negative probabilities from
arising when freely composing processes. It is not immediately clear, however,
whether having done so we satisfy all of the other conditions of a GPT, and
indeed this turns out not to be the case. Therefore, some extra steps are
needed in the construction, in particular, to ensure that the theory is convex
and tomographic.

In more detail, the steps followed in the construction, along  with what they
aim to achieve and how we denote them, are the following:

\begin{enumerate}
	\item  Take the non-signalling channels in $\mathbf{G}$ and decompose
	      them
	      as per Lemma \ref{hermitianstates}. Take each output system of
	      each
	      $\tilde{\xi}^\Lambda$ and promote it to a new primitive system
	      type. Collect
	      all these new system types and, together with the system types
	      from
	      $\mathbf{G}$, define a new set of systems types including them
	      all.  Moreover,
	      include as processes within the theory all of the processes from
	      $\mathbf{G}$
	      together with all processes which are required such that these
	      new systems can
	      realise the common-cause channels as per
	      Lemma~\ref{hermitianstates}.
	      \begin{enumerate}
		      \item[Aim:] \textbf{To ensure that the common-cause
				      decompositions}
			      for non-signalling channels of $\mathbf{G}$ exist
			      in $
				      \mathcal{C}[\mathbf{G}]$.
		      \item[Notation:] $ \mathbf{G} \mapsto \mathbf{G} \sqcup
				      \boldsymbol{\eta}$
	      \end{enumerate}
	\item Take the closure of those systems and processes under
	      composition,
	      and of the processes under convex combinations.
	      \begin{enumerate}
		      \item[Aim:] \textbf{To ensure the compositionality and
				      convexity
				      rules are obeyed}.
		      \item[Notation:] $\mathbf{G} \sqcup \boldsymbol{\eta}
				      \mapsto
				      \mathsf{Conv} [\overline{ \mathbf{G}
						      \sqcup \boldsymbol{\eta}}]$
	      \end{enumerate}
	\item Quotient the theory  $\mathsf{Conv} [\overline{ \mathbf{G} \sqcup
				      \boldsymbol{\eta}}]$  via operational
	      equivalence.
	      \begin{enumerate}
		      \item[Aim:] \textbf{To ensure the theory satisfies
				      tomography}.
		      \item[Notation:] $ \mathsf{Conv} [\overline{ \mathbf{G}
						      \sqcup
						      \boldsymbol{\eta}}]
				      \mapsto  \mathsf{Conv} [\overline{ \mathbf{G} \sqcup
						      \boldsymbol{\eta}}] /
				      \sim $
	      \end{enumerate}
\end{enumerate}
It is this theory that we will define as our common-cause completion, i.e.
$\mathcal{C}[\mathbf{G}] := \mathsf{Conv} [\overline{ \mathbf{G} \sqcup
			\boldsymbol{\eta}}] / \sim$.

As we progress through the steps, we will show	that they do indeed achieve the
stated aim. In the end, we will therefore see that the outcome
$\mathcal{C}[\mathbf{G}]$ of this construction is a valid causal GPT (in
particular, that there are no extra restrictions on composing systems and
processes) and that it is a common-cause completion of $\mathbf{G}$.

In this section  we will be dealing with many system types from different GPTs
(due to the nature of the problem of extending a theory),  and therefore  we
shall use colors to differentiate the wires corresponding to different
theories' system types. The convention we follow is given by the following
table:

\begin{center}
	\begin{tabular}{||c|c||}
		\hline
		System Type                                           & Wire
		Type                                                         \\
		\hline
		System from the classical subtheory, $\mathbf{Stoch}$ & $
}$
		\\
		\hline
		Generic system from the target GPT, $\mathbf{G}$      & $
}$
		\\
		\hline
		Extra system to be added to $\mathbf{G}$              & $
}$                                    \\
		\hline
		Generic system in the new GPT                         & $
}$                                    \\
		\hline
	\end{tabular}
\end{center}

\subsubsection*{\textbf{Step 1 - Add generating system types and processes}}
Starting from $ \mathbf{G}$, for each $ \Lambda$  in $ \mathbf{G}$ decomposed
as in Eq.~\eqref{lemma1eq}, let us define a vector space $ A_{i}^{\Lambda}$
which is isomorphic to $ \tilde{i}^{\Lambda}$ with isomorphism $
	\iota_{i}^{\Lambda} : \tilde{i}^{\Lambda} \to A_{i}^{\Lambda}$. Then,
we define
the following linear maps:

\begin{equation}\label{eq29}
\InputIfFileExists{Diagrams/NewMapsDef1.tikz}{}{\input{./figures/Diagrams/NewMapsDef1.tikz}} \ \ :=\ \
\InputIfFileExists{Diagrams/NewMapsDef2.tikz}{}{\input{./figures/Diagrams/NewMapsDef2.tikz}}\qquad
\end{equation}
and
\begin{equation}\label{eq30}
\InputIfFileExists{Diagrams/NewMapsDef3.tikz}{}{\input{./figures/Diagrams/NewMapsDef3.tikz}}\ \ :=\ \
\InputIfFileExists{Diagrams/NewMapsDef4.tikz}{}{\input{./figures/Diagrams/NewMapsDef4.tikz}}.
\end{equation}
Note that the isomorphisms $ \iota_{i}^{\Lambda}$ and their inverses are
\emph{not} taken to be physically realisable processes within the theory that
we are constructing,  hence, we denote them, as above, with black boxes. We
will, however, take the above composites of them with the $
	\tilde{\eta}_{i}^{\Lambda}$ and $ \tilde{\xi}^{\Lambda}$, to give
$\eta_i^\Lambda$ and $\xi^\Lambda$, to be valid processes in the theory we are
defining,  hence why the left-hand-side of Eqs.~\eqref{eq29} and \eqref{eq30}
are white-coloured boxes.

We therefore obtain the following straightforward corollary of Lemma
\ref{hermitianstates}:
\begin{corollary}
	Any $m$-partite non-signalling channel, $N$, can be written as:
	\begin{equation}\label{}
\InputIfFileExists{Diagrams/NoSignallingChannel.tikz}{}{\input{./figures/Diagrams/NoSignallingChannel.tikz}}\ \ =\ \
\InputIfFileExists{Diagrams/NoSignallingChannelDecompositionGPT.tikz}{}{\input{./figures/Diagrams/NoSignallingChannelDecompositionGPT.tikz}}
	\end{equation}
\end{corollary}
\proof This immediately follows from the definition of the $\eta_i^\Lambda$ and
the $\xi^\Lambda$ (Eqs.~\eqref{eq29} and \eqref{eq30}) together with the fact
that the $\iota_i^\Lambda$ are isomorphisms.

We include these extra systems $ A_{i}^{\Lambda}$ and processes $
	\xi^{\Lambda}$, $ \eta_{i}^{\Lambda}$ within the GPT we are building,
thereby
extending $ \mathbf{G}$ and enabling the realisation of arbitrary
non-signalling channels from $\mathbf{G}$ within the common-cause scenario.

\subsubsection*{\textbf{Step 2 - Take closure under compositions and convex
		combinations}}
For the second step, let us denote by $ |\mathbf{G} |$ the collection of
systems of $ \mathbf{G}$, and (with slight abuse of notation) by $ \mathbf{G}$
the collection of its processes. In order to define the closure properties that
we want, we will note that we can view all of the processes that we have
defined as living within the process theory of real linear maps,
$\mathds{R}\mathbf{Linear}$. To see this, recall that $\mathbf{G}$ is, by
assumption, tomographically local, and hence is a subtheory of
$\mathds{R}\mathbf{Linear}$, and that the new systems and processes that we
have added are all, by definition, real linear maps.

We therefore define another subtheory of $\mathds{R}\mathbf{Linear}$ which is,
by construction, closed under composition as follows:

\begin{defn}[]
	We denote by $ \overline{\mathbf{G} \sqcup \boldsymbol{\eta}} $ the
	subtheory of  $\mathds{R}\mathbf{Linear}$ whose objects (system types)
	are the
	closure of $| \mathbf{G} | \sqcup \{A_{i}^{\Lambda}\}_{\Lambda, i}$
	under $
		\otimes$, and whose morphisms (processes) are the closure of $
		\mathbf{G}
		\sqcup \{\eta_{i}^{\Lambda} , \xi^{\Lambda}\}_{\Lambda, i}$
	under $ \circ$ and
	$ \otimes$ as the operations in $\mathds{R}\mathbf{Linear}$.
	\hfill $\CIRCLE$
\end{defn}
Note that, even though we did not explicitly mention the states of the $
	A_{i}^{\Lambda}$ systems, these are implicitly defined by the above
closure to
obtain $ \overline{\mathbf{G} \sqcup \boldsymbol{\eta}} $. For example, by
varying over $ \rho$ in the following diagram, we can obtain many states of $
	A_{1}^{\Lambda}$:
\begin{equation}\label{}
\InputIfFileExists{Diagrams/A1N-States.tikz}{}{\input{./figures/Diagrams/A1N-States.tikz}}.
\end{equation}
In  the same way,  effects and other general processes on the new system types
$ A_{i}^{\Lambda}$ can also be defined. The fact that we only have an implicit
definition of the state and effect space is in stark contrast to traditional
ways of constructing GPTs, in which the convex geometry of the state and effect
spaces is typically the first thing to be defined and then the compositional
structure is built on top of this. Here we invert this, first starting with the
compositional structure and then defining the geometry of the states and
effects which this provides.

Next we will check whether $ \overline{\mathbf{G} \sqcup \boldsymbol{\eta}} $
leads to sensible probabilistic predictions, namely, whether it contains
$\mathbf{Stoch}$ as a full subtheory.
To answer this we note that $\mathbf{Stoch}$ is a full subtheory of
$\mathbf{G}$ and show that $\mathbf{G}$ is a full subtheory of	$
	\overline{\mathbf{G} \sqcup \boldsymbol{\eta}} $, hence, by
transitivity, that
$\mathbf{Stoch}$ is a full subtheory of  $ \overline{\mathbf{G} \sqcup
		\boldsymbol{\eta}} $.

Specifically, what we need to show is that any process with all inputs and
outputs in $|\mathbf{G}|$, such as
\begin{equation}\label{}
\InputIfFileExists{Diagrams/QuantumExt.tikz}{}{\input{./figures/Diagrams/QuantumExt.tikz}}
\end{equation}
yields a valid process from $\mathbf{G}$. Note that this is not guaranteed
apriori, due to the fact that the new systems $A_i^\Lambda$ appear in the
interior of the diagram. However, in our case it turns out that this is true as
is proven in the following lemma.
\begin{lem}\label{lem:positivity}
	Any process in $\overline{\mathbf{G}\sqcup \boldsymbol{\eta}}$ with
	only input and output system types in $| \mathbf{G} |$ is a valid
	process in $
		\mathbf{G}$.
\end{lem}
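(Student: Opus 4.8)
The plan is to exploit the fact that the only generators of $\overline{\mathbf{G}\sqcup\boldsymbol{\eta}}$ that touch the newly-added system types $A_i^\Lambda$ are the states $\xi^\Lambda$ and the maps $\eta_i^\Lambda$ of Eqs.~\eqref{eq29} and~\eqref{eq30}: an $A_i^\Lambda$ wire can only be \emph{created} as one of the output legs of some $\xi^\Lambda$, and can only be \emph{consumed} as the unique new-type input of some $\eta_i^\Lambda$. First I would write the given process $P$, whose boundary wires all lie in $|\mathbf{G}|$, as a composite (under $\circ$ and $\otimes$, together with the identities and symmetries that any SMC supplies) of these primitive generators together with processes of $\mathbf{G}$. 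Since no $A_i^\Lambda$ wire appears on the boundary of $P$, and since there is no generator that terminates such a wire other than an $\eta_i^\Lambda$, every $A_i^\Lambda$ wire internal to $P$ must run from an output leg of some $\xi^\Lambda$ to the new-type input of some $\eta_i^\Lambda$.

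Next I would pin down the connectivity. Because each $A_i^\Lambda$ is a distinct primitive type, labelled by the pair $(\Lambda,i)$, type-matching forces the $i$-th output leg of a $\xi^\Lambda$ to connect to an $\eta_i^\Lambda$ of the \emph{same} $\Lambda$, and no two legs of one $\xi^\Lambda$ can enter the same $\eta$-box, since each $\eta_i^\Lambda$ carries exactly one new-type input, of type $A_i^\Lambda$. Hence the $A$-wires define a matching between $\xi$-boxes and $\eta$-boxes in which each $\xi^\Lambda$ is joined to exactly one $\eta_i^\Lambda$ for every $i\in\{1,\dots,m\}$ and each $\eta_i^\Lambda$ is joined to exactly one $\xi^\Lambda$. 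Consequently the $\xi$- and $\eta$-boxes partition into disjoint clusters, each consisting of a single $\xi^\Lambda$ together with the complete family $\{\eta_i^\Lambda\}_{i=1}^m$ matched to it. I would note that no partial cluster can occur: an unmatched $\xi$-leg would leave a dangling $A$-wire on the boundary, contradicting the hypothesis, while an $\eta_i^\Lambda$ whose new-type input is unsupplied cannot arise, as $\xi^\Lambda$ is the only source of an $A_i^\Lambda$ wire.

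The payoff is that each such cluster is, by the corollary to Lemma~\ref{hermitianstates} (the diagrammatic identity obtained from Eqs.~\eqref{eq29} and~\eqref{eq30}, where the isomorphisms $\iota_i^\Lambda$ cancel against their inverses), exactly the non-signalling channel $\Lambda\colon \otimes_i i \to \otimes_i i'$, whose inputs and outputs all live in $|\mathbf{G}|$. Since $\Lambda$ is a genuine non-signalling channel of $\mathbf{G}$, it is a process of $\mathbf{G}$. Replacing every cluster in $P$ by the corresponding $\Lambda$ therefore rewrites $P$ as a diagram built solely from processes of $\mathbf{G}$, connected along wires of $|\mathbf{G}|$ by the composition operations of $\mathds{R}\mathbf{Linear}$ — which, restricted to such processes, agree with those of $\mathbf{G}$ because $\mathbf{G}$ is a subtheory of $\mathds{R}\mathbf{Linear}$ closed under $\circ$ and $\otimes$. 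Hence $P$ is a process of $\mathbf{G}$, as required.

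I expect the main obstacle to be the second step: justifying rigorously that the $A$-wires must organize into \emph{complete} $\Lambda$-clusters rather than some tangled or partial pattern. The subtlety is that a single $\xi^\Lambda$ can feed several distinct $\eta$-instances while different clusters become interleaved through the surrounding $\mathbf{G}$-processes, so one cannot simply read off the clusters by locality in the diagram. The crux is that the ``bottleneck'' structure of the new types—created only by $\xi^\Lambda$, absorbed only by $\eta_i^\Lambda$, each of a unique primitive type—nonetheless forces a perfect matching whose connected pieces are precisely the full decompositions of Lemma~\ref{hermitianstates}. A clean way to make this airtight is by induction on the number of $\xi$-boxes, peeling off one complete cluster at a time; the non-physicality of the internal affine state $\tilde{\xi}^\Lambda$ never causes trouble, since it is always quarantined inside a cluster that, taken as a whole, equals the physical channel $\Lambda$.
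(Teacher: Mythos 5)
Your proposal is correct and takes essentially the same route as the paper's own proof: both exploit that $\xi^\Lambda$ is the only generator producing, and $\eta_i^\Lambda$ the only generator consuming, the new types $A_i^\Lambda$, so the internal $A_i^\Lambda$ wires must pair these boxes into complete clusters that, by the corollary to Lemma~\ref{hermitianstates}, reassemble into the channel $\Lambda$ itself, after which closure of $\mathbf{G}$ under composition finishes the argument. The only difference is presentational --- the paper peels off one cluster at a time (your suggested induction) while you establish the global matching first --- and the ``main obstacle'' you anticipate is already dissolved by the type-matching argument you yourself give.
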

\begin{proof}
	The proof can be found in Appendix \ref{se:pl2}.
\end{proof}

Next we show that $\overline{\mathbf{G}\sqcup \boldsymbol{\eta}}$ is compatible
with relativistic causal structure, in the sense that there is a unique effect
for each system \cite{coecke2014terminality,kissinger2017equivalence}.
\begin{lem} \label{lem:causality}
	There is a unique effect for each system in $\overline{\mathbf{G}\sqcup
			\boldsymbol{\eta}}$.
\end{lem}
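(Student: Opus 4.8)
The plan is to reduce the claim to showing that \emph{every} process in $\overline{\mathbf{G}\sqcup\boldsymbol{\eta}}$ is discard-preserving in the sense of Definition~\ref{deterministic}. Indeed, we are working in the deterministic (``CPTP-only'') setting: if the theory carries discarding effects $\top_{X}:X\to I$ for every system $X$ with $\top_{I}=\mathrm{id}_{I}$ and $\top_{X\otimes Y}=\top_{X}\otimes\top_{Y}$, and if every process preserves them, then any effect $e:S\to I$ satisfies $e=\top_{I}\circ e=\top_{S}$, so the discarding effect is the \emph{only} effect on $S$. Hence it suffices to (i) equip each system with a discarding effect that genuinely belongs to the theory, and (ii) check that all processes preserve it.

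First I would fix the discarding effects on the primitive systems and extend them to composites by $\top_{X\otimes Y}:=\top_{X}\otimes\top_{Y}$. On systems from $|\mathbf{G}|$ I take $\top$ to be the $\mathbf{G}$-discarding, which exists since $\mathbf{G}$ is causal. On a new system $A_{i}^{\Lambda}$ I \emph{define} $\top_{A_{i}^{\Lambda}}:=\top_{\tilde{i}^{\Lambda}}\circ(\iota_{i}^{\Lambda})^{-1}$ as a map in $\mathds{R}\mathbf{Linear}$, where $\top_{\tilde{i}^{\Lambda}}$ is the $\mathbf{G}$-discarding of the $\mathbf{G}$-system $\tilde{i}^{\Lambda}$. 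The first nontrivial point is that this map actually lies in $\overline{\mathbf{G}\sqcup\boldsymbol{\eta}}$, since the $\iota_{i}^{\Lambda}$ themselves are not processes of the theory. Using the definition of $\eta_{i}^{\Lambda}$ in Eq.~\eqref{eq29} together with the fact that $\tilde{\eta}_{i}^{\Lambda}$ is discard-preserving in $\mathbf{G}$, one computes
\begin{equation}
\top_{i'}\circ\eta_{i}^{\Lambda}
=\big(\top_{i}\otimes\top_{\tilde{i}^{\Lambda}}\big)\circ\big(\mathrm{id}_{i}\otimes(\iota_{i}^{\Lambda})^{-1}\big)
=\top_{i}\otimes\top_{A_{i}^{\Lambda}},
\end{equation}
which is a composite of processes already in the theory; precomposing with $\rho\otimes\mathrm{id}_{A_{i}^{\Lambda}}$ for any normalised state $\rho$ of $i$ then isolates $\top_{A_{i}^{\Lambda}}$, exhibiting it as a process of $\overline{\mathbf{G}\sqcup\boldsymbol{\eta}}$.

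Next I would verify discard-preservation of each generator. Processes inherited from $\mathbf{G}$ are discard-preserving by causality of $\mathbf{G}$ (using that $\top$ restricts to the $\mathbf{G}$-discarding on $\mathbf{G}$-systems). The maps $\eta_{i}^{\Lambda}$ are discard-preserving by exactly the displayed computation above. The essential case is $\xi^{\Lambda}$: using its definition in Eq.~\eqref{eq30} and the definition of $\top_{A_{i}^{\Lambda}}$, the isomorphisms cancel and one is left with $\big(\bigotimes_{i}\top_{\tilde{i}^{\Lambda}}\big)\circ\tilde{\xi}^{\Lambda}$. Here I would invoke the property from Lemma~\ref{hermitianstates} that $\tilde{\xi}^{\Lambda}$ is an \emph{affine} combination $\sum_{k}c_{k}s_{k}$ of genuine $\mathbf{G}$-states with $\sum_{k}c_{k}=1$; discarding each normalised $s_{k}$ yields the scalar $1$, so the whole expression equals $\sum_{k}c_{k}=1$, i.e.\ $\xi^{\Lambda}$ is a normalised (discard-preserving) state. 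Finally, discard-preservation is stable under $\circ$ and $\otimes$ by the standard one-line arguments using $\top_{X\otimes Y}=\top_{X}\otimes\top_{Y}$, so every process in the closure is discard-preserving, and the reduction in the first paragraph yields a unique effect for each system.

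I expect the crux to be the $\xi^{\Lambda}$ step: it is precisely the affineness (unit coefficient-sum) of $\tilde{\xi}^{\Lambda}$ that forces $\xi^{\Lambda}$ to be normalised and so keeps the new discarding effects mutually consistent; without it, discarding the shared ``state'' would not return $1$ and the uniqueness argument would collapse. The other genuinely new point is confirming that the formally-defined $\top_{A_{i}^{\Lambda}}$ really sits inside $\overline{\mathbf{G}\sqcup\boldsymbol{\eta}}$ rather than merely in $\mathds{R}\mathbf{Linear}$, which the computation above secures. Everything else is routine bookkeeping in the terminal/causal framework.
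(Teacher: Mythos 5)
Your proof is correct, but it takes a genuinely different route from the paper's. The paper argues \emph{directly on effects}: for systems in $|\mathbf{G}|$ it invokes Lemma~\ref{lem:positivity} plus causality of $\mathbf{G}$, and for a new system $A_i^{\Lambda}$ it observes that $\eta_i^{\Lambda}$ is the \emph{only} generator consuming $A_i^{\Lambda}$, so any effect must factor as $\tilde{e}\circ\eta_i^{\Lambda}\circ(\sigma'\otimes\mathrm{id}_{A_i^{\Lambda}})$ with $\tilde{e},\sigma'$ in $\mathbf{G}$; uniqueness of effects in $\mathbf{G}$ then forces $\tilde{e}$ to be discarding, and unfolding Eq.~\eqref{eq29} collapses every effect to the single canonical one $\top_{\tilde{i}^{\Lambda}}\circ(\iota_i^{\Lambda})^{-1}$. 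You instead prove \emph{terminality}: you exhibit the same candidate discard family, show it lies in the theory (your isolation trick with $\rho\otimes\mathrm{id}$ is needed and correct, since the $\iota_i^{\Lambda}$ themselves are not processes of the theory), verify discard-preservation generator by generator, and propagate it through the closure under $\circ$ and $\otimes$, after which $e=\top_I\circ e=\top_S$ gives uniqueness abstractly. The two approaches hinge on the same core computation (discard-preservation of $\tilde{\eta}_i^{\Lambda}$ together with Eq.~\eqref{eq29}), but they distribute the work differently. Your structural induction over generators is arguably tighter: the paper's claim that an \emph{arbitrary} effect has the displayed factored form with $\tilde{e},\sigma'$ purely in $\mathbf{G}$ really requires iterating the wire-elimination surgery of Lemma~\ref{lem:positivity} (the ancillary wires between the sub-diagrams could a priori carry new system types), whereas your argument never needs to analyse the shape of a diagram, and it handles composite systems uniformly rather than deferring them as ``straightforward.'' The price you pay is the extra verification that $\xi^{\Lambda}$ is normalised, which the paper's effect-only analysis never confronts; your identification of the affineness (unit coefficient sum) of $\tilde{\xi}^{\Lambda}$ as the reason this works is exactly right. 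Both proofs, yours and the paper's, implicitly assume each system of $\mathbf{G}$ has at least one normalised state, which is harmless in this framework.
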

\begin{proof}
	The proof can be found in Appendix \ref{se:pl3}.
\end{proof}

A GPT must also be closed under convex combinations so as to model
probabilistic mixtures of processes, and so far we have not proven that this is
the case for  $\overline{\mathbf{G}\sqcup \boldsymbol{\eta}}$. Indeed, it  is
conceivable that this property has been lost when adding in the new systems and
processes and arbitrary diagrams thereof.
Hence, we take the convex closure of $ \overline{ \mathbf{G} \sqcup
		\boldsymbol{\eta}}$, via the convex combinations of linear maps
provided by the
supertheory $\mathds{R}\mathbf{Linear}$.

\begin{defn}[]
	We denote by $\mathsf{Conv}[\overline{ \mathbf{G} \sqcup
				\boldsymbol{\eta} }]$ the convex closure of $
		\overline{ \mathbf{G} \sqcup
			\boldsymbol{\eta} }$ under convex combinations of
	processes taken as linear
	combinations of linear maps from $\mathds{R}\mathbf{Linear}$.	 \hfill
	$\CIRCLE$
\end{defn}
Notice that the properties of `has $\mathbf{Stoch}$ as a full subtheory' and
`is causal' that we proved for $ \overline{ \mathbf{G} \sqcup \boldsymbol{\eta}
	}$ are properties which must hold in any GPT, hence we next show they
also
hold for $\mathsf{Conv}[ \overline{ \mathbf{G} \sqcup \boldsymbol{\eta} }]$:
\begin{lem}\label{lem:tomcauspreserv}
	i) Any process in $\mathsf{Conv}[\overline{\mathbf{G}\sqcup
				\boldsymbol{\eta}}]$ with only	input and
	output system types in $| \mathbf{G}
		|$ is a valid process in $ \mathbf{G}$.  ii) There is a unique
	discarding
	effect for each system in  $\mathsf{Conv}[\overline{\mathbf{G}\sqcup
				\boldsymbol{\eta}}]$.
\end{lem}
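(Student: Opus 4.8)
The plan is to lift the two properties from $\overline{\mathbf{G}\sqcup\boldsymbol{\eta}}$, where they were already established in Lemmas~\ref{lem:positivity} and~\ref{lem:causality}, up to the convex closure $\mathsf{Conv}[\overline{\mathbf{G}\sqcup\boldsymbol{\eta}}]$. The structural observation that makes this possible is that passing to the convex closure introduces no new system types — it only adds processes, namely convex combinations of existing ones — and that, by the definition of the convex structure, a convex combination is only formed from processes sharing the same input and output types. Hence every process of $\mathsf{Conv}[\overline{\mathbf{G}\sqcup\boldsymbol{\eta}}]$ of type $A\to B$ can be written as $\sum_k p_k\, g_k$ with $p_k\ge 0$, $\sum_k p_k = 1$, and each $g_k$ a process of the very same type $A\to B$ in $\overline{\mathbf{G}\sqcup\boldsymbol{\eta}}$.

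For part (i), I would take an arbitrary process $f$ in $\mathsf{Conv}[\overline{\mathbf{G}\sqcup\boldsymbol{\eta}}]$ whose input and output types both lie in $|\mathbf{G}|$, and decompose it as $f=\sum_k p_k g_k$ as above. Since the decomposition is type-respecting, each $g_k$ again has input and output types in $|\mathbf{G}|$, so Lemma~\ref{lem:positivity} applies and tells us that each $g_k$ is already a valid process of $\mathbf{G}$. Because $\mathbf{G}$ is a GPT, it is closed under convex combinations of processes of matching type, so the mixture $f=\sum_k p_k g_k$ is itself a valid process of $\mathbf{G}$, which is exactly the claim.

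For part (ii), I would first note that the systems of $\mathsf{Conv}[\overline{\mathbf{G}\sqcup\boldsymbol{\eta}}]$ coincide with those of $\overline{\mathbf{G}\sqcup\boldsymbol{\eta}}$. Fixing an arbitrary system $S$, let $\epsilon_S$ denote the unique effect on $S$ in $\overline{\mathbf{G}\sqcup\boldsymbol{\eta}}$ supplied by Lemma~\ref{lem:causality}. Any effect on $S$ in the convex closure is, by the same type-respecting decomposition, a convex combination $\sum_k p_k e_k$ of effects $e_k:S\to I$ in $\overline{\mathbf{G}\sqcup\boldsymbol{\eta}}$; but uniqueness forces every $e_k=\epsilon_S$, so the combination collapses to $\sum_k p_k\,\epsilon_S=\epsilon_S$. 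Thus $\epsilon_S$ remains the one and only effect on $S$, establishing uniqueness of the discarding effect.

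The one point I expect to require explicit care — the potential obstacle — is justifying that $\mathsf{Conv}[\overline{\mathbf{G}\sqcup\boldsymbol{\eta}}]$ is genuinely still closed under composition, so that \emph{every} effect in it really is a plain convex combination of effects of $\overline{\mathbf{G}\sqcup\boldsymbol{\eta}}$ rather than something produced by composing convex mixtures in a way that escapes the argument. This is handled by bilinearity of composition: a composite of two convex mixtures, $(\sum_i p_i f_i)\circ(\sum_j q_j g_j)=\sum_{i,j}p_i q_j\,(f_i\circ g_j)$, is again a convex combination (the weights $p_iq_j$ are nonnegative and sum to one) of processes $f_i\circ g_j$ of $\overline{\mathbf{G}\sqcup\boldsymbol{\eta}}$, which is already closed under $\circ$. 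Consequently no genuinely new effects can arise, and the two arguments above apply to every process of $\mathsf{Conv}[\overline{\mathbf{G}\sqcup\boldsymbol{\eta}}]$, not merely to the freshly-mixed ones.
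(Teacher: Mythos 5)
Your proof is correct and follows essentially the same route as the paper's: part (i) reduces to Lemma~\ref{lem:positivity} plus closure of $\mathbf{G}$ under convex combinations, and part (ii) reduces to Lemma~\ref{lem:causality} plus the observation that composition and mixing cannot produce any new effects. The only difference is that you make explicit the bilinearity argument guaranteeing that $\mathsf{Conv}[\overline{\mathbf{G}\sqcup \boldsymbol{\eta}}]$ remains closed under composition, a point the paper's proof leaves implicit.
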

\begin{proof}
	The proof can be found in Appendix \ref{se:pl4}.
\end{proof}

\subsubsection*{\textbf{Step 3 - Quotient the theory}}

There is one final property which must be satisfied in order to have a GPT on
our hands, that is, tomography. That means that we need to be able to establish
the equality between two processes when the probabilities that they can produce
are the same. At this point, however, we do not know that
$\mathsf{Conv}[\overline{\mathbf{G}\sqcup \boldsymbol{\eta}}]$ satisfies this
property. Hence, we need a way to ``merge'' any two differently-labelled but
operationally-equivalent processes (defined shortly) into a single one.

To enforce this, we simply take the quotient
$\mathsf{Conv}[\overline{\mathbf{G}\sqcup \boldsymbol{\eta}}]$ under
operational equivalence. That amounts to defining processes to be equivalence
classes and also the operations of sequential, parallel, and convex
compositions thereof. For this, let us first formally specify what we mean by
``operational equivalence''.

\begin{defn}Processes $f$ and $f'$ (with the same input systems and the same
	output systems) are operationally equivalent if they give the same
	statistical
	predictions when composed with any circuit fragment $\tau$ such that
	the
	resulting process has only classical inputs and outputs:
	\begin{equation}
\begin{tikzpicture}
	\begin{pgfonlayer}{nodelayer}
		\node [style=small box] (0) at (0, 0) {$f$};
		\node [style=none] (1) at (0, 1) {};
		\node [style=none] (2) at (0, -1) {};
	\end{pgfonlayer}
	\begin{pgfonlayer}{edgelayer}
		\draw [gWire] (2.center) to (0);
		\draw [gWire] (1.center) to (0);
	\end{pgfonlayer}
\end{tikzpicture}
}\ \ \sim \ \ %
\begin{tikzpicture}
	\begin{pgfonlayer}{nodelayer}
		\node [style=small box] (0) at (0, 0) {$f'$};
		\node [style=none] (1) at (0, 1) {};
		\node [style=none] (2) at (0, -1) {};
	\end{pgfonlayer}
	\begin{pgfonlayer}{edgelayer}
		\draw [gWire] (2.center) to (0);
		\draw [gWire] (1.center) to (0);
	\end{pgfonlayer}
\end{tikzpicture}
}
		\quad
		\iff \quad \forall \tau\ \  %
\InputIfFileExists{Diagrams/TomogDef3.tikz}{}{\input{./figures/Diagrams/TomogDef3.tikz}}\ \ = \ \
\InputIfFileExists{Diagrams/TomogDef4.tikz}{}{\input{./figures/Diagrams/TomogDef4.tikz}}
	\end{equation}
	\hfill $\CIRCLE$
\end{defn}
Note that we are using green wires to denote arbitrary systems which may be
$\mathbf{G}$-type, the new systems $A_i^{\Lambda} $, or even systems of the
quotiented theory, because operational equivalence is a concept defined
independently of the theory. In any case, we will apply this here only to
$\mathsf{Conv}[\overline{\mathbf{G} \sqcup \boldsymbol{\eta}}]$ in order to
construct the quotiented theory.
We denote the equivalence classes defined by this by square brackets, hence we
can write that $f\sim f' \iff [f]=[f']$, and moreover think of some $f' \in
	[f]$ as providing a representative for the equivalence class of
operations that
$f'$ belongs to.

In order to build a theory in which processes are labelled by equivalence
classes of processes, we must first define a notion of composition for the
equivalence classes.
\begin{defn}\label{def:QuotComp}
	The equivalence classes of processes compose sequentially as
	\begin{equation}
\InputIfFileExists{Diagrams/QuotComp1.tikz}{}{\input{./figures/Diagrams/QuotComp1.tikz}} \ \ :=\ \
\begin{tikzpicture}
	\begin{pgfonlayer}{nodelayer}
		\node [style=small box] (0) at (0, 0) {$[g\circ f]$};
		\node [style=none] (2) at (0, -1) {};
		\node [style=none] (5) at (0, 1) {};
	\end{pgfonlayer}
	\begin{pgfonlayer}{edgelayer}
		\draw [gWire] (2.center) to (0);
		\draw [gWire] (5.center) to (0);
	\end{pgfonlayer}
\end{tikzpicture}
}
	\end{equation}
	and compose in parallel as
	\begin{equation}
\InputIfFileExists{Diagrams/QuotComp3.tikz}{}{\input{./figures/Diagrams/QuotComp3.tikz}} \ \ := \ \
\begin{tikzpicture}
	\begin{pgfonlayer}{nodelayer}
		\node [style=small box] (0) at (0, 0) {$[g\otimes f]$};
		\node [style=none] (2) at (0, -1) {};
		\node [style=none] (5) at (0, 1) {};
	\end{pgfonlayer}
	\begin{pgfonlayer}{edgelayer}
		\draw [gWire] (2.center) to (0);
		\draw [gWire] (5.center) to (0);
	\end{pgfonlayer}
\end{tikzpicture}
}
	\end{equation}
	\hfill $\CIRCLE$
\end{defn}
For these to be valid operations between equivalence classes, they must
not depend on the choices of representatives:
\begin{lem}
	\label{lem:comp}Composition as defined in Def.~\ref{def:QuotComp} is
	independent of the choices of representatives. That is,
	\begin{equation}
		\begin{aligned}
\begin{tikzpicture}
	\begin{pgfonlayer}{nodelayer}
		\node [style=small box] (0) at (0, 0) {$f$};
		\node [style=none] (1) at (0, 1) {};
		\node [style=none] (2) at (0, -1) {};
	\end{pgfonlayer}
	\begin{pgfonlayer}{edgelayer}
		\draw [gWire] (2.center) to (0);
		\draw [gWire] (1.center) to (0);
	\end{pgfonlayer}
\end{tikzpicture}
}\	\sim\
\begin{tikzpicture}
	\begin{pgfonlayer}{nodelayer}
		\node [style=small box] (0) at (0, 0) {$f'$};
		\node [style=none] (1) at (0, 1) {};
		\node [style=none] (2) at (0, -1) {};
	\end{pgfonlayer}
	\begin{pgfonlayer}{edgelayer}
		\draw [gWire] (2.center) to (0);
		\draw [gWire] (1.center) to (0);
	\end{pgfonlayer}
\end{tikzpicture}
}\quad & \text{and}\quad
\begin{tikzpicture}
	\begin{pgfonlayer}{nodelayer}
		\node [style=small box] (0) at (0, 0) {$g$};
		\node [style=none] (1) at (0, 1) {};
		\node [style=none] (2) at (0, -1) {};
	\end{pgfonlayer}
	\begin{pgfonlayer}{edgelayer}
		\draw [gWire] (2.center) to (0);
		\draw [gWire] (1.center) to (0);
	\end{pgfonlayer}
\end{tikzpicture}
}\	\sim \
\begin{tikzpicture}
	\begin{pgfonlayer}{nodelayer}
		\node [style=small box] (0) at (0, 0) {$g'$};
		\node [style=none] (1) at (0, 1) {};
		\node [style=none] (2) at (0, -1) {};
	\end{pgfonlayer}
	\begin{pgfonlayer}{edgelayer}
		\draw [gWire] (2.center) to (0);
		\draw [gWire] (1.center) to (0);
	\end{pgfonlayer}
\end{tikzpicture}
}\qquad
			\implies  \qquad %
\InputIfFileExists{Diagrams/DPProof1.tikz}{}{\input{./figures/Diagrams/DPProof1.tikz}} \	\sim \
\InputIfFileExists{Diagrams/DPProof2.tikz}{}{\input{./figures/Diagrams/DPProof2.tikz}} \quad \text{and} \quad
\InputIfFileExists{Diagrams/DPProof3.tikz}{}{\input{./figures/Diagrams/DPProof3.tikz}}
			\  \sim\   %
\InputIfFileExists{Diagrams/DPProof4.tikz}{}{\input{./figures/Diagrams/DPProof4.tikz}}.
		\end{aligned}
	\end{equation}
\end{lem}
\begin{proof}
	The proof can be found in Appendix \ref{se:pl5}.
\end{proof}

In a similar way we can define convex combinations of equivalence classes as
follows:
\begin{defn}\label{def:QuotMix} Convex mixtures of equivalence classes of
	processes are given by the following:
	\begin{equation}
		p\ %
\begin{tikzpicture}
	\begin{pgfonlayer}{nodelayer}
		\node [style=small box] (0) at (0, 0) {$[f]$};
		\node [style=none] (2) at (0, -1) {};
		\node [style=none] (5) at (0, 1) {};
	\end{pgfonlayer}
	\begin{pgfonlayer}{edgelayer}
		\draw [gWire] (2.center) to (0);
		\draw [gWire] (5.center) to (0);
	\end{pgfonlayer}
\end{tikzpicture}
} \ \ + \ (1-p) \
\begin{tikzpicture}
	\begin{pgfonlayer}{nodelayer}
		\node [style=small box] (0) at (0, 0) {$[g]$};
		\node [style=none] (2) at (0, -1) {};
		\node [style=none] (5) at (0, 1) {};
	\end{pgfonlayer}
	\begin{pgfonlayer}{edgelayer}
		\draw [gWire] (2.center) to (0);
		\draw [gWire] (5.center) to (0);
	\end{pgfonlayer}
\end{tikzpicture}
}
		\quad := \quad %
\begin{tikzpicture}
	\begin{pgfonlayer}{nodelayer}
		\node [style=small box] (0) at (0, 0) {$[pf+(1-p)g]$};
		\node [style=none] (2) at (0, -1) {};
		\node [style=none] (5) at (0, 1) {};
	\end{pgfonlayer}
	\begin{pgfonlayer}{edgelayer}
		\draw [gWire] (2.center) to (0);
		\draw [gWire] (5.center) to (0);
	\end{pgfonlayer}
\end{tikzpicture}
}
	\end{equation}
	\hfill $\CIRCLE$
\end{defn}
It is easy to see that the relevant properties of convex  combinations, for
example distributivity	over $\circ$ and $\otimes$, are immediately inherited
from the analogous property in the prequotiented theory. Again, for
consistency, we prove the following:
\begin{lem}\label{lem:convmix}
	Convex mixtures as defined in Def.~\ref{def:QuotMix} are independent of
	the
	choice of representative. That is:
	\begin{equation}
} \sim
}\quad \text{and} \quad
}  \sim
}\quad \\
		\implies\\ \quad %
\begin{tikzpicture}
	\begin{pgfonlayer}{nodelayer}
		\node [style=small box] (0) at (0, 0) {$pf+(1-p)g$};
		\node [style=none] (1) at (0, 1) {};
		\node [style=none] (2) at (0, -1) {};
	\end{pgfonlayer}
	\begin{pgfonlayer}{edgelayer}
		\draw [gWire] (2.center) to (0);
		\draw [gWire] (1.center) to (0);
	\end{pgfonlayer}
\end{tikzpicture}
}\ \sim\
\begin{tikzpicture}
	\begin{pgfonlayer}{nodelayer}
		\node [style=small box] (0) at (0, 0) {$pf'+(1-p)g'$};
		\node [style=none] (1) at (0, 1) {};
		\node [style=none] (2) at (0, -1) {};
	\end{pgfonlayer}
	\begin{pgfonlayer}{edgelayer}
		\draw [gWire] (2.center) to (0);
		\draw [gWire] (1.center) to (0);
	\end{pgfonlayer}
\end{tikzpicture}
}.
	\end{equation}
\end{lem}
\begin{proof}
	The proof can be found in Appendix \ref{se:pl6}.
\end{proof}

\bigskip

These operations allow us to define the quotiented theory as follows:
\begin{defn}
	We denote the theory whose processes are operational equivalence
	classes of the processes in $ \mathsf{Conv} [ \overline{ \mathbf{G}
				\sqcup
				\boldsymbol{\eta} }] $, with composition and
	convex mixtures given by Defs.
	\ref{def:QuotComp} and \ref{def:QuotMix}, by  $ \mathsf{Conv}
		[\overline{
				\mathbf{G} \sqcup \boldsymbol{\eta} } ] / \sim
	$ \hfill $\CIRCLE$
\end{defn}

Note that, as $ \mathbf{G}$ is a GPT, and hence satisfies tomography, for a
valid process $f$ in $ \mathbf{G}$, we have $[f] = \{f\}$, that is, each
equivalence class of processes in $ \mathbf{G}$ contains a single element. It
is then clear that Lemma~\ref{lem:positivity} also holds for our quotiented
theory.
Moreover, it is also clear that Lemma~\ref{lem:causality} continues to hold
even in our quotiented theory, as quotienting could only identify effects for a
particular system with one another, and as we only have  a unique effect for a
given system in the first place we have a unique effect after quotienting.

The theory $\mathsf{Conv}[\overline{\mathbf{G}\sqcup \boldsymbol{\eta}}]/\sim$
therefore satisfies all of the desired properties to be considered a causal
GPT.

While the GPT that we constructed is  $\mathsf{Conv}[\overline{\mathbf{G}\sqcup
			\boldsymbol{\eta}}]/\sim$, it is clear that it is much
easier to perform
calculations within $\mathsf{Conv}[\overline{\mathbf{G}\sqcup
			\boldsymbol{\eta}}]$ as it is simply a subtheory of
$\mathds{R}\mathbf{Linear}$. Luckily one can always perform calculations in
$\mathsf{Conv}[\overline{\mathbf{G}\sqcup \boldsymbol{\eta}}]/\sim$ by picking
suitable representative elements for the equivalence classes, doing a
computation within $\mathds{R}\mathbf{Linear}$, and then requotienting to
determine the resultant equivalence class.

\begin{defn}[\textbf{Common-cause completion map}]
	The map $\mathcal{C}$ given by $\mathcal{C}[\mathbf{G}] \equiv
		\mathsf{Conv}[\overline{\mathbf{G}\sqcup
				\boldsymbol{\eta}}]/\sim$ is a
	common-cause completion map on the set of causal locally-tomographic
	GPTs.
	\hfill $\CIRCLE$
\end{defn}

This is because $\mathcal{C}[\mathbf{G}]$ is a valid GPT which contains
$\mathbf{G}$ as a full subtheory and where every $ \Lambda \in \mathbf{G}$ has
a common-cause realisation in $\mathcal{C}[\mathbf{G}]$.

\section{Results and discussion} \label{conclusion}

The construction we have presented for a common-cause completion map  is useful
as it allows us to understand possible causal explanations of physical
phenomena. To elaborate on this, let us first introduce our main theorem and a
useful corollary.

\begin{theorem}
	Given a locally-tomographic causal GPT $\mathbf{G}$, its set of
	multipartite
	non-signalling channels (Def.~\ref{nsdef}) is the same as its set of
	multipartite common-cause realisable (Def.~\ref{commoncausedef})
	channels.
	Notice these common-causes might not be state-preparations allowed in
	$\mathbf{G}$
\end{theorem}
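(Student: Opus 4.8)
The plan is to prove the asserted set equality by the two inclusions. The inclusion of the common-cause realisable channels into the non-signalling channels is already in hand, since it is exactly Proposition~\ref{cccns}. Hence the content of the theorem is the converse inclusion: that every non-signalling channel of $\mathbf{G}$ is GPT-common-cause realisable in the sense of Definition~\ref{commoncausedef}. I would establish this by exhibiting, for an arbitrary non-signalling $\Lambda$ in $\mathbf{G}$, a single ambient GPT that does the job uniformly, rather than hunting for a bespoke $\mathbf{G}'$ per channel.

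The natural candidate is $\mathbf{G}' = \mathcal{C}[\mathbf{G}] = \mathsf{Conv}[\overline{\mathbf{G}\sqcup\boldsymbol{\eta}}]/\!\sim$ built in Section~\ref{constructing}, and the argument is then a matter of assembling the pieces already prepared. First, by the Corollary following Eqs.~\eqref{eq29}--\eqref{eq30}, $\Lambda$ can be rewritten with the state $\xi^\Lambda$ on the new composite system $A_1^\Lambda\otimes\cdots\otimes A_m^\Lambda$ feeding into the local discard-preserving maps $\eta_i^\Lambda$; this is precisely the diagrammatic shape demanded by Definition~\ref{commoncausedef}, with $\xi^\Lambda$ playing the role of the common cause and the $\eta_i^\Lambda$ the local operations $T_i$. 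Second, I would certify that $\mathcal{C}[\mathbf{G}]$ is a legitimate ambient theory: it is a valid causal GPT because it contains $\mathbf{Stoch}$ as a full subtheory and has unique discarding effects (Lemmas~\ref{lem:causality} and~\ref{lem:tomcauspreserv}), is closed under composition and convex mixtures by construction, and satisfies tomography after the quotient. Crucially, Lemma~\ref{lem:positivity} and its quotient form Lemma~\ref{lem:tomcauspreserv} guarantee that any process with only $\mathbf{G}$-type inputs and outputs is already a process of $\mathbf{G}$, and since $\mathbf{G}$ is tomographic its equivalence classes are singletons; together these yield that $\mathbf{G}$ sits inside $\mathcal{C}[\mathbf{G}]$ as a full subtheory, as Definition~\ref{commoncausedef} requires. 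The input and output wires $i,i'$ of $\Lambda$ stay $\mathbf{G}$-systems while only the internal wires $A_i^\Lambda$ are new, matching the colour convention of that definition.

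The step I expect to carry the real weight is not this logical assembly, which is short once the construction is granted, but the verification underpinning it --- deferred to the appendices --- that $\mathcal{C}[\mathbf{G}]$ is a bona fide GPT. In particular the crux is Lemma~\ref{lem:positivity}: that composing the non-physical affine ingredients $\xi^\Lambda$ and $\eta_i^\Lambda$ through the freshly added system types can never produce a $\mathbf{G}$-to-$\mathbf{G}$ process lying outside $\mathbf{G}$, equivalently never yields negative outcome probabilities. Granting that lemma, the theorem follows by combining Proposition~\ref{cccns} with the decomposition of the Corollary, and I would close --- as the theorem statement itself signals --- by stressing that the common cause $\xi^\Lambda$ need not be a genuine state-preparation of $\mathbf{G}$, but only of the completion $\mathcal{C}[\mathbf{G}]$.
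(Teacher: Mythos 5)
Your proposal is correct and follows essentially the same route as the paper's own proof: one inclusion via Proposition~\ref{cccns}, and the converse by taking $\mathbf{G}'=\mathcal{C}[\mathbf{G}]$ and invoking the construction of Section~\ref{constructing} (the Corollary's decomposition into $\xi^\Lambda$ and $\eta_i^\Lambda$, together with Lemmas~\ref{lem:positivity}--\ref{lem:convmix} certifying that $\mathcal{C}[\mathbf{G}]$ is a valid causal GPT containing $\mathbf{G}$ as a full subtheory). The only difference is that you spell out explicitly what the paper compresses into the phrase ``by construction,'' which is a faithful expansion rather than a different argument.
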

\proof
Consider the GPT $\mathcal{C}[\mathbf{G}]$. By Prop.~\ref{cccns}, the
common-cause realisable channels in $\mathbf{G}$ are non-signalling. In the
other direction, by construction,  $\mathcal{C}[\mathbf{G}]$  can provide a
common-cause realisation of any non-signalling channel of $ \mathbf{G}$.
\endproof
Noting that $\mathbf{Quant}$ is a locally-tomographic causal GPT we immediately
obtain the following:
\begin{corollary}\label{TheCorollary}
	There exists a causal GPT that provides a common-cause realisation of
	every
	non-signalling quantum channel. Such a GPT is given by
	$\mathcal{C}[\mathbf{Quant}] $.
\end{corollary}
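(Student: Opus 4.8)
The plan is to obtain the Corollary as a direct specialisation of the Theorem to the case $\mathbf{G}=\mathbf{Quant}$, so that the only genuine work lies in certifying that quantum theory satisfies the hypothesis of the Theorem, namely that it is a locally-tomographic causal GPT. First I would check the five defining properties from Section~\ref{GPTs} in turn for $\mathbf{Quant}$: that $\mathbf{Stoch}$ embeds as a full subtheory (via the classical-to-classical channels acting on diagonal density operators); that the CPTP maps are closed under convex mixtures, yielding a convex structure compatible with that of $\mathbf{Stoch}$; that quantum processes are tomographic; that each system carries a unique discarding effect, the trace, so that the theory is causal in the sense of Section~\ref{sec:discard}; and, most importantly, that quantum theory is locally tomographic, since a bipartite density operator is completely determined by the joint outcome statistics of local measurements on the two wings.

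Having verified these hypotheses, I would apply the Theorem to $\mathbf{G}=\mathbf{Quant}$, concluding that the non-signalling quantum channels (Def.~\ref{nsdef}) coincide with the GPT-common-cause realisable quantum channels (Def.~\ref{commoncausedef}). To pin down the witnessing theory explicitly, I would then invoke the common-cause completion map built in Section~\ref{results}: by Lemmas~\ref{lem:positivity}, \ref{lem:causality} and \ref{lem:tomcauspreserv} together with the final quotient, $\mathcal{C}[\mathbf{Quant}]$ is a valid causal GPT; it contains $\mathbf{Quant}$ as a full subtheory; and, by the construction of the $\eta_i^\Lambda$ and $\xi^\Lambda$ processes out of the decomposition of Lemma~\ref{hermitianstates}, every non-signalling channel of $\mathbf{Quant}$ admits a common-cause decomposition inside it. Taking $\mathbf{G}'=\mathcal{C}[\mathbf{Quant}]$ in Def.~\ref{commoncausedef} therefore furnishes precisely the causal GPT demanded by the statement, establishing both the existence claim and the identification of the witness. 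The converse inclusion, that such common-cause realisable channels are non-signalling, is already supplied by Prop.~\ref{cccns}.

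The main obstacle is not in the deduction itself: once the Theorem and the construction of $\mathcal{C}$ are granted, the Corollary is essentially a one-line instantiation. The single point that genuinely requires attention is the verification of local tomography for $\mathbf{Quant}$, because this is exactly the property on which the entire construction rests --- it is what licenses viewing $\mathbf{Quant}$ as a subtheory of $\mathds{R}\mathbf{Linear}$ and hence applying Lemma~\ref{hermitianstates} --- and it is precisely the property that fails for neighbouring theories such as real-amplitude quantum theory. Once local tomography is in hand, no further estimate or construction is needed, and both the existence of a suitable causal GPT and its explicit realisation as $\mathcal{C}[\mathbf{Quant}]$ follow immediately.
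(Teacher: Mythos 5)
Your proposal is correct and follows essentially the same route as the paper: the paper's proof of Corollary~\ref{TheCorollary} is precisely the observation that $\mathbf{Quant}$ is a locally-tomographic causal GPT, so the Theorem applies with $\mathbf{G}=\mathbf{Quant}$ and the witness is $\mathcal{C}[\mathbf{Quant}]$, with Prop.~\ref{cccns} supplying the converse inclusion. The only difference is that you spell out the verification of the GPT axioms (including local tomography) for quantum theory, which the paper treats as a known fact and passes over in a single clause.
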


This corollary is important for two reasons. Firstly, it answers in the
negative `Open Question 1' posed in Ref.~\cite{schmid2020postquantum}:
\textit{Do there exist bipartite non-signalling quantum channels which cannot
	be realized by GPT common causes?}.

Secondly, recall the phenomenon of Einstein-Podolski-Rosen (EPR) inference
\cite{rossi2022characterising} (a.k.a. steering) where a party (say Alice)
learns about the state preparation of a physical system (held by a distant
party, hereon called Bob) by performing measurements on her share of the
bipartite physical system \cite{schrodinger36,SteeWiseman07}. Here the object
of study is the collection of \textit{subnormalised conditional states} that
Bob's subsystem may be prepared in, usually called an \textit{assemblage}
\cite{pusey2013negativity}. Similarly to the case of non-signalling
correlations in Bell experiments, one may mathematically define general
assemblages as those which comply with the no-signalling principle. Given the
particular causal structure that underpins these EPR experiments, then, a
crucial foundational question is whether these general assemblages could be
realised within some (beyond quantum) GPT as a common-cause process. This
question can be readily answered in the affirmative by Cor.~\ref{TheCorollary},
given that assemblages in EPR scenarios can be formalised in terms of
non-signalling quantum-classical channels \cite{pqsc}. This sets the foundation
stone to be able to study the non-classicality of EPR assemblages based on the
properties of the common-cause process within the GPT that may realise them.
In particular, this observation answers in the affirmative the question posed
in Ref.~\cite{cavalcanti2022post}: there exists a causal GPT $\mathbf{Q}'$ that
provides a common-cause realisation of every general assemblage.

More generally, our result provides the fundamental justification of the
possibility to assess and quantify the non-classicality of arbitrary
non-signalling processes by means of the non-classicality of the common-cause
required to realise  them.  This has  previously been argued at length for the
case of correlations in Bell scenarios \cite{cowpie}, where the existence of
common-cause realisations of non-signalling boxes had already been provided by
the GPT known as Boxworld \cite{barrett2007gpt}. In this light, hence, our work
enables the possibility of extending this causal reasoning to scenarios beyond
Bell experiments, which involve other local systems types rather than strictly
classical ones.

\bigskip

Looking forward, there are many open questions pertaining to the common-cause
completion construction that we defined:
\begin{itemize}
	\item Is $\mathcal{C}[\mathbf{G}]$ common-cause complete? Intuitively it seems
	      that this should be the case, but conceivably there  may be  non-signalling
	      channels between the new systems which are not realisable in common cause
	      scenarios within $\mathcal{C}[\mathbf{G}]$. Note that
	      $\mathcal{C}[\mathcal{C}[\mathbf{G}]]$ may not be well defined because we do
	      not yet know whether or not:
	      \begin{itemize}
		      \item $\mathcal{C}[\mathbf{G}]$ is tomographically local, or
		      \item whether or not there is a way to extend the common-cause completion to
		            tomographically-nonlocal GPTs, or to more general kinds of process theories.
	      \end{itemize}\par\noindent
\end{itemize}\par\noindent
Whilst being of technical nature, we expect the answers to these questions to
also help us deepen our understanding on the possible non-signalling processes
that can be motivated, understood, and studied from the causal perspective.
\section*{Acknowledgements}
P.J.C. and A.B.S. acknowledge support by the Foundation for Polish Science (IRAP project, ICTQT, Contract
No. 2018/MAB/5, co-financed by EU within Smart Growth
Operational Programme). J.H.S. was supported by the National Science Centre, Poland (Opus project, Categorical
Foundations of the Non-Classicality of Nature, Project No.
2021/41/B/ST2/03149). All of the diagrams within this
manuscript were prepared using TikZit.

\bibliographystyle{apsrev4-2}
\bibliography{bibliography}

\appendix

\section{(strict) Symmetric Monoidal Categories}\label{ap:SMC}

Since we define a generalised probabilistic theory (GPT) in terms of a strict
symmetric monoidal category (SMC), we devote this appendix to define the
latter. We follow that with a brief commentary on interpreting that structure
in terms of processes, which is key  for understanding how to see that GPTs are
SMCs, and end with the most important example of SMC for this paper.

A (strict) symmetric monoidal category consists of (i) a collection of objects
$A,B...$, (ii) for each pair of objects $A,B$, a collection of morphisms $f:A
	\to B$, and (iii) two operations, $ \circ$ and $ \otimes$ under which
the
category is closed.
The first operation, $ \circ$, maps certain pairs of morphisms to morphisms. In
particular, it combines $f: A \to B$ and $g:B \to C$ into $g \circ f: A \to C$,
and can be performed only when the domain of $g$ matches the codomain of $f$
(in this example, the matching is given by the object $B$). Furthermore,
$\circ$ is associative, so it is similar to function composition.
(iv) An identity morphism $1_{A} : A \to A$  that is a unit for $\circ$, is
moreover associated to each object $A$.
The second operation, $ \otimes$, combines arbitrary pairs of objects, taking
$A$ and $B$ to $A\otimes B$ as well as arbitrary pairs of morphisms, taking
$f:A \to B$ and $g: C \to D$ into $ f \otimes g: A \otimes C \to B \otimes D$.
Furthermore, $ \otimes $, is associative and has a unit object which we denote
$I$, so it is a monoid operation on the collection of objects, being therefore
responsible for the monoidal structure of the category. Finally, the two
operations satisfy a consistency condition, namely that $(g\circ f) \otimes
	(g'\circ f') = (g\otimes g')\circ (f\otimes f')$.

An interesting property of the symmetric monoidal categories is that they
feature a diagrammatic calculus, which provides an intuitive and expressive way
to write  and perform mathematical  calculations. For a description of that, we
refer  the reader  to the section \ref{diagrams} of the main text.

The bare structure of the SMC has a nice interpretation in terms of processes
\cite{coecke2011universe}. We take the objects to represent system types, and
call the monoidal unit, denoted $I$, the trivial system. The morphisms $f: A
	\to B$ are interpreted as processes that take a system of type $A$ into
a
system of type $B$.
The processes that start (but do not end) in the unit object (the trivial
system), i.e., those like $ s: I \to A$, are called states, the ones that end
(but do not start) in $I$, like $e: A \to I$, are called effects, and the ones
who neither start nor end in $I$, such as $f:A \to B$, are called
transformations.
This is intuitive because $ s: I \to A$ can be viewed as some preparation
procedure of a system of type $A$, and $e: A \to I$ as a destructive operation.
Next, processes that start and end in $I$, such as $p: I \to I$, are called
\textit{numbers}, or scalars.
Now, processes can happen sequentially or in parallel, and this is captured by
the SMC -- we interpret $ g \circ f$ as the sequential composition of the
processes $f$ and $g$, where $f$ is followed by $g$ (which acts on the output
of $f$), and $ f \otimes g$ as the composite process given by $f$ and $g$
occurring in parallel. This interpretation of $\circ$ and $\otimes$  motivates
the consistency condition that they had to satisfy,  since that is  the natural
relationship between processes happening in parallel and in sequence.

We now illustrate this abstract definition  of an SMC by means of  the key
example for this paper:
\begin{example}[$\mathds{R}\mathbf{Linear}$]\label{def:RLin}
	The SMC $\mathds{R}\mathbf{Linear}$ takes objects (system types) to be
	real
	vector spaces, and,  morphisms (processes) to be linear maps between
	the vector
	spaces. The $ \circ$ operation is the composition of linear maps, and $
		\otimes$ is the tensor product. The identity morphisms are
	given by the
	identity linear maps, and the monoidal unit is given by the one
	dimensional
	vector space $\mathds{R}$.
\end{example}

\section{Proofs}

\subsection{Proof of Lemma \ref{lem:positivity}}\label{se:pl2}
\textbf{Lemma \ref{lem:positivity}}.  Any process in
$\overline{\mathbf{G}\sqcup \boldsymbol{\eta}}$ with only  input and output
system types in $| \mathbf{G} |$ is a valid process in $ \mathbf{G}$.
\begin{proof}
	First note that, by definition, any process in
	$\overline{\mathbf{G}\sqcup
			\boldsymbol{\eta}}$ can always be written as a diagram
	involving only our
	generating processes, that is, processes in $\mathbf{G}$, and the
	processes  in
	$\{\eta_i^{\Lambda} , \xi^{\Lambda} \}_{\Lambda, i}$.

	Now consider an arbitrary process  $F$ in $\overline{\mathbf{G}\sqcup
			\boldsymbol{\eta}}$ with input and output system types
	in $| \mathbf{G} |$.
	This process can be written in terms of the above-mentioned generating
	processes:
	\begin{equation}
\begin{tikzpicture}
	\begin{pgfonlayer}{nodelayer}
		\node [style=none] (39) at (0, 1) {};
		\node [style=small box] (40) at (0, 0) {$F$};
		\node [style=none] (43) at (0, -1) {};
	\end{pgfonlayer}
	\begin{pgfonlayer}{edgelayer}
		\draw [qWire, in=-90, out=90] (43.center) to (40);
		\draw [qWire, in=90, out=-90] (39.center) to (40);
	\end{pgfonlayer}
\end{tikzpicture}
} \ \ = \ \
\InputIfFileExists{Diagrams/PositivityProof2.tikz}{}{\input{./figures/Diagrams/PositivityProof2.tikz}} \,,
	\end{equation}
	where we do not specify the internal structure of the dashed box as the
	actual
	compositional structure of $F$ has no generic specification but we
	assume it is
	a diagram consisting of generating processes.

	We will now show that this  box associated to the process $F$  can
	always be
	rewritten into a diagram which only involves processes in $\mathbf{G}$.
	This
	follows from the fact the we can rewrite any diagram using only
	generating
	processes.

	Suppose, for example, that the diagram involves the process
	$\eta_1^{\Lambda}$,
	that is:
	\begin{equation}
\InputIfFileExists{Diagrams/PositivityProof2.tikz}{}{\input{./figures/Diagrams/PositivityProof2.tikz}}\ \ = \ \
\InputIfFileExists{Diagrams/PositivityProof3.tikz}{}{\input{./figures/Diagrams/PositivityProof3.tikz}} \,.
	\end{equation}
	Since	$A_1^{\Lambda}$ is not an input to the process $F$ (as we are
	assuming
	that the inputs and outputs  system types are in $| \mathbf{G} |$),
	there
	must be a process in the diagram $Diag'$ for which this system,
	$A_1^{\Lambda}$, is an output. There is only one generating process
	which has
	$A_1^{\Lambda}$ as an output, namely, $\xi^{\Lambda}$. Hence, we can
	write
	diagram $Diag'$ as:
	\begin{equation}
\InputIfFileExists{Diagrams/PositivityProof3.tikz}{}{\input{./figures/Diagrams/PositivityProof3.tikz}} \ \  = \ \
\InputIfFileExists{Diagrams/PositivityProof4.tikz}{}{\input{./figures/Diagrams/PositivityProof4.tikz}} \,.
	\end{equation}
	We also know that none of the $A_i^{\Lambda}$ are outputs of the
	process,
	hence, they must be the input of some process within the diagram
	$Diag''$. For
	each of these there is a single generating process which has
	$A_i^{\Lambda} $
	as an input, namely, $\eta_i^{\Lambda} $. This means we can rewrite the
	diagram
	$Diag''$ as:
	\begin{equation}
		\begin{aligned}
\InputIfFileExists{Diagrams/PositivityProof4.tikz}{}{\input{./figures/Diagrams/PositivityProof4.tikz}} \ \  & = \ \
\InputIfFileExists{Diagrams/PositivityProof5.tikz}{}{\input{./figures/Diagrams/PositivityProof5.tikz}} \,.
		\end{aligned}
	\end{equation}
	The explicitly drawn part of the diagram, however, is now nothing but
	the
	non-signalling channel $ \Lambda$, which is a process that lives in
	$\mathbf{G}$:
	\begin{equation}
		\begin{aligned}
\InputIfFileExists{Diagrams/PositivityProof5.tikz}{}{\input{./figures/Diagrams/PositivityProof5.tikz}} \ \ = \ \
\InputIfFileExists{Diagrams/PositivityProof6.tikz}{}{\input{./figures/Diagrams/PositivityProof6.tikz}} \,.
		\end{aligned}
	\end{equation}
	Hence, we have shown that we can redraw the diagram  associated to $F$
	so as
	not to use the generating process $\eta_1^{\Lambda}$. This argument
	clearly
	also applies to any  other $\eta_i^{\Lambda}$ that may appear in the
	specification of $F$,  and a very minor modification of it applies to
	any
	$\xi^{\Lambda}$.

	This means that any process in $\overline{\mathbf{G}\sqcup
			\boldsymbol{\eta}}$
	whose input and output system have types in $| \mathbf{G} |$
	can always be written in a way that only involves  generating processes
	from
	$\mathbf{G}$ and does not involve any generating processes from
	$\{\eta_i^{\Lambda}, \xi^{\Lambda}\}$. As $\mathbf{G}$ is closed under
	composition, we have therefore shown that any process  with input and
	output
	system types in $| \mathbf{G} |$  is necessarily a  valid process in
	$\mathbf{G}$.
\end{proof}

Notice that, in particular, Lemma \ref{lem:positivity} implies that the theory
$\overline{\mathbf{G}\sqcup \boldsymbol{\eta}}$  that we have defined will make
sensible probabilistic predictions, since the classical systems are valid
systems in $\mathbf{G}$ and any processes with only classical inputs and
outputs is necessarily a stochastic map.

\subsection{Proof of Lemma \ref{lem:causality}}\label{se:pl3}
\textbf{Lemma \ref{lem:causality}}. There is a unique discarding effect for
each system in $\overline{\mathbf{G}\sqcup \boldsymbol{\eta}}$.

\begin{proof}
	Here we show that every generating type has a unique discarding effect,
	as the
	generalisation to composite types is straightforward.

	For  each generating system $i$ from the GPT $\mathbf{G}$,
	Lem.~\ref{lem:positivity} implies that the discarding effect for $i$ is
	itself
	a valid process $\mathbf{G}$. Since $\mathbf{G}$ is causal, this means
	that the
	discarding effect for $i$ in $\overline{\mathbf{G}\sqcup
			\boldsymbol{\eta}}$ is
	unique.

	Now we need to show that the claim also holds for system types beyond
	those
	present in the GPT $\mathbf{G}$, i.e., the systems $\{A_i^{\Lambda}\}$.

	Since all processes of $\overline{\mathbf{G} \sqcup \boldsymbol{\eta}}$
	can
	decomposed in terms of generating processes, we can write a generic
	effect for
	$A_i^\Lambda$ as
	\begin{equation}\label{}
\begin{tikzpicture}
	\begin{pgfonlayer}{nodelayer}
		\node [style=none] (0) at (0, -0.75) {};
		\node [style=none] (1) at (0, 0.5) {};
		\node [style=copoint] (2) at (0, 0.75) {$e$};
		\node [style=label] (3) at (0.5, -0.25) {$A_i^{\Lambda}$};
	\end{pgfonlayer}
	\begin{pgfonlayer}{edgelayer}
		\draw [pWire] (1.center) to (0.center);
	\end{pgfonlayer}
\end{tikzpicture}
}\ \ =\ \
\InputIfFileExists{Diagrams/CausalProof2.tikz}{}{\input{./figures/Diagrams/CausalProof2.tikz}}
	\end{equation}
	where, thanks to Lemma~\ref{se:pl2}, we know that $\tilde{e}$ and
	$\sigma'$ are necessarily in $\mathbf{G}$.
	Then, as there is a unique effect for each system in $\mathbf{G}$ we
	have that

	\begin{equation}
\InputIfFileExists{Diagrams/CausalProof2.tikz}{}{\input{./figures/Diagrams/CausalProof2.tikz}}  \	\  = \ \
\InputIfFileExists{Diagrams/CausalProof2_3.tikz}{}{\input{./figures/Diagrams/CausalProof2_3.tikz}}\ \ =:\ \
\InputIfFileExists{Diagrams/CausalProof2_4.tikz}{}{\input{./figures/Diagrams/CausalProof2_4.tikz}}
	\end{equation}
	Now, using the definition of $\eta_i^\Lambda$ we have that:
	\begin{equation} %
\InputIfFileExists{Diagrams/CausalProof2_4.tikz}{}{\input{./figures/Diagrams/CausalProof2_4.tikz}}\ \ =\ \
\InputIfFileExists{Diagrams/CausalProof2_5.tikz}{}{\input{./figures/Diagrams/CausalProof2_5.tikz}}\ \ =\ \
\InputIfFileExists{Diagrams/CausalProof2_6.tikz}{}{\input{./figures/Diagrams/CausalProof2_6.tikz}}\ \
		= \ \ %
\InputIfFileExists{Diagrams/TerminalityProof4.tikz}{}{\input{./figures/Diagrams/TerminalityProof4.tikz}} \ \ =:\ \
\begin{tikzpicture}
	\begin{pgfonlayer}{nodelayer}
		\node [style=upground] (31) at (0, 0) {};
		\node [style=right label] (35) at (0, -1) {$A_i^{\Lambda}$};
		\node [style=none] (38) at (0, -1) {};
	\end{pgfonlayer}
	\begin{pgfonlayer}{edgelayer}
		\draw [pWire] (31) to (38.center);
	\end{pgfonlayer}
\end{tikzpicture}
}
	\end{equation}

	Hence the extra systems in the enlarged theory still satisfy the
	property of
	having an unique effect to each system type.
\end{proof}

\subsection{Proof of Lemma \ref{lem:tomcauspreserv}}\label{se:pl4}
\textbf{Lemma \ref{lem:tomcauspreserv}}.
i) Any process in $\mathsf{Conv}[\overline{\mathbf{G}\sqcup
			\boldsymbol{\eta}}]$ with only	input and output system
types in $| \mathbf{G}
	|$ is a valid process in $ \mathbf{G}$.  ii) There is a unique
discarding
effect for each system in  $\mathsf{Conv}[\overline{\mathbf{G}\sqcup
			\boldsymbol{\eta}}]$.

\begin{proof} \quad \\
	\begin{itemize}
		\item[i)]  From Lem.~\ref{lem:positivity} we know that any
			process in
			$\overline{\mathbf{G}\sqcup \boldsymbol{\eta}}$ with
			only input and output
			system types in $| \mathbf{G} |$ is a valid process in
			$ \mathbf{G}$. Since
			processes in $ \mathbf{G}$ are closed under convex
			combinations, this implies
			that any convex combination of processes in
			$\overline{\mathbf{G}\sqcup
					\boldsymbol{\eta}}$ with only input and
			output system types in $| \mathbf{G} |$
			is a valid process in $ \mathbf{G}$, which proves the
			claim.
		\item[ii)] That there is a unique discarding effect for each
			system immediately
			follows from Lem.~\ref{lem:causality} together with the
			fact that  since there
			is  a unique discarding effect	for each generating
			system type  it is
			impossible to obtain  other discarding effects by
			composition and convex
			combinations.
	\end{itemize}
\end{proof}

\subsection{Proof of Lemma \ref{lem:comp}}\label{se:pl5}
\textbf{Lemma \ref{lem:comp}}. Composition as defined in
Def.~\ref{def:QuotComp} is independent of the choices of representatives. That
is,
\begin{equation}
	\begin{aligned}
}\	\sim\
}\quad                 &
		\text{and}\quad
}\	\sim \
}\qquad
		\implies  \qquad %
\InputIfFileExists{Diagrams/DPProof1.tikz}{}{\input{./figures/Diagrams/DPProof1.tikz}} \	\sim \
\InputIfFileExists{Diagrams/DPProof2.tikz}{}{\input{./figures/Diagrams/DPProof2.tikz}} \quad \text{and} \quad &
\InputIfFileExists{Diagrams/DPProof3.tikz}{}{\input{./figures/Diagrams/DPProof3.tikz}} \  \sim\
\InputIfFileExists{Diagrams/DPProof4.tikz}{}{\input{./figures/Diagrams/DPProof4.tikz}}.
	\end{aligned}
\end{equation}

\begin{proof}
	We start by rewriting the assumptions of the theorem using the
	definition of
	equivalence:
	\begin{equation}\label{ffp}
\begin{tikzpicture}
	\begin{pgfonlayer}{nodelayer}
		\node [style=small box] (0) at (0, 0) {$f$};
		\node [style=none] (1) at (0, 1) {};
		\node [style=none] (2) at (0, -1) {};
	\end{pgfonlayer}
	\begin{pgfonlayer}{edgelayer}
		\draw [gWire] (2.center) to (0);
		\draw [gWire] (1.center) to (0);
	\end{pgfonlayer}
\end{tikzpicture}
}\ \ \sim\ \
\begin{tikzpicture}
	\begin{pgfonlayer}{nodelayer}
		\node [style=small box] (0) at (0, 0) {$f'$};
		\node [style=none] (1) at (0, 1) {};
		\node [style=none] (2) at (0, -1) {};
	\end{pgfonlayer}
	\begin{pgfonlayer}{edgelayer}
		\draw [gWire] (2.center) to (0);
		\draw [gWire] (1.center) to (0);
	\end{pgfonlayer}
\end{tikzpicture}
}\qquad \iff\qquad \forall \tau
		\quad
\InputIfFileExists{Diagrams/CompProof3.tikz}{}{\input{./figures/Diagrams/CompProof3.tikz}}\ \  =\ \
\InputIfFileExists{Diagrams/CompProof4.tikz}{}{\input{./figures/Diagrams/CompProof4.tikz}}
	\end{equation}
	and
	\begin{equation}\label{ggp}
\begin{tikzpicture}
	\begin{pgfonlayer}{nodelayer}
		\node [style=small box] (0) at (0, 0) {$g$};
		\node [style=none] (1) at (0, 1) {};
		\node [style=none] (2) at (0, -1) {};
	\end{pgfonlayer}
	\begin{pgfonlayer}{edgelayer}
		\draw [gWire] (2.center) to (0);
		\draw [gWire] (1.center) to (0);
	\end{pgfonlayer}
\end{tikzpicture}
}\ \ \sim\ \
\begin{tikzpicture}
	\begin{pgfonlayer}{nodelayer}
		\node [style=small box] (0) at (0, 0) {$g'$};
		\node [style=none] (1) at (0, 1) {};
		\node [style=none] (2) at (0, -1) {};
	\end{pgfonlayer}
	\begin{pgfonlayer}{edgelayer}
		\draw [gWire] (2.center) to (0);
		\draw [gWire] (1.center) to (0);
	\end{pgfonlayer}
\end{tikzpicture}
}\qquad \iff\qquad \forall \tau
		\quad
\InputIfFileExists{Diagrams/CompProof7.tikz}{}{\input{./figures/Diagrams/CompProof7.tikz}}\ \ =\ \
\InputIfFileExists{Diagrams/CompProof8.tikz}{}{\input{./figures/Diagrams/CompProof8.tikz}} \,.
	\end{equation}

	Hence, for an arbitrary $\tau$, the following holds:
	\begin{equation}\label{}
		\begin{aligned}
\InputIfFileExists{Diagrams/CompProof9.tikz}{}{\input{./figures/Diagrams/CompProof9.tikz}}\ \  & =:\ \
\InputIfFileExists{Diagrams/CompProof10.tikz}{}{\input{./figures/Diagrams/CompProof10.tikz}}\ \
			\stackrel{\eqref{ffp}}{=}\ \
\InputIfFileExists{Diagrams/CompProof11.tikz}{}{\input{./figures/Diagrams/CompProof11.tikz}}\ \ =\ \
\InputIfFileExists{Diagrams/CompProof12.tikz}{}{\input{./figures/Diagrams/CompProof12.tikz}}\ \ =                     \\
			                                  & \hspace{-0.5cm}=\ \
\InputIfFileExists{Diagrams/CompProof13.tikz}{}{\input{./figures/Diagrams/CompProof13.tikz}}\ \
			=:\ \ %
\InputIfFileExists{Diagrams/CompProof14.tikz}{}{\input{./figures/Diagrams/CompProof14.tikz}}\ \
			\stackrel{\eqref{ggp}}{=}\ \
\InputIfFileExists{Diagrams/CompProof15.tikz}{}{\input{./figures/Diagrams/CompProof15.tikz}}\ \ =\ \
\InputIfFileExists{Diagrams/CompProof16.tikz}{}{\input{./figures/Diagrams/CompProof16.tikz}} ,
		\end{aligned} \,.
	\end{equation}
	This implies that
	\begin{equation}\label{}
\InputIfFileExists{Diagrams/CompProof17.tikz}{}{\input{./figures/Diagrams/CompProof17.tikz}}\ \ \sim\ \
\InputIfFileExists{Diagrams/CompProof18.tikz}{}{\input{./figures/Diagrams/CompProof18.tikz}}\,,
	\end{equation}
	hence proving the first part of the lemma.

	The proof of the second part of the lemma follows in a similar way: for
	an arbitrary $ \tau$,
	\begin{equation}\label{}
		\begin{aligned}
\InputIfFileExists{Diagrams/CompProof19.tikz}{}{\input{./figures/Diagrams/CompProof19.tikz}}\ \  & =:\ \
\InputIfFileExists{Diagrams/CompProof20.tikz}{}{\input{./figures/Diagrams/CompProof20.tikz}}\ \ \stackrel{
				\eqref{ffp}}{=}\ \
\InputIfFileExists{Diagrams/CompProof21.tikz}{}{\input{./figures/Diagrams/CompProof21.tikz}} \ \ =\ \
\InputIfFileExists{Diagrams/CompProof22.tikz}{}{\input{./figures/Diagrams/CompProof22.tikz}}\ \ =\ \
\InputIfFileExists{Diagrams/CompProof23.tikz}{}{\input{./figures/Diagrams/CompProof23.tikz}}             \\   & =: \ \
\InputIfFileExists{Diagrams/CompProof24.tikz}{}{\input{./figures/Diagrams/CompProof24.tikz}}	\ \
			    \stackrel{\eqref{ggp}}{=} \ \
\InputIfFileExists{Diagrams/CompProof25.tikz}{}{\input{./figures/Diagrams/CompProof25.tikz}} \ \ =\ \
\InputIfFileExists{Diagrams/CompProof26.tikz}{}{\input{./figures/Diagrams/CompProof26.tikz}}	\ \ =\ \
\InputIfFileExists{Diagrams/CompProof27.tikz}{}{\input{./figures/Diagrams/CompProof27.tikz}} \,.
		\end{aligned}
	\end{equation}
	Hence,
	\begin{equation}\label{}
\InputIfFileExists{Diagrams/CompProof28.tikz}{}{\input{./figures/Diagrams/CompProof28.tikz}}\ \ \sim\ \
\InputIfFileExists{Diagrams/CompProof29.tikz}{}{\input{./figures/Diagrams/CompProof29.tikz}},
	\end{equation}
	which completes the proof.
\end{proof}

\subsection{Proof of Lemma \ref{lem:convmix}}\label{se:pl6}
\textbf{Lemma \ref{lem:convmix}}. Convex mixtures as defined in
Def.~\ref{def:QuotMix} are independent of the choice of representative. That
is:
\begin{equation}
	\begin{aligned}
}\ \ \sim\ \
}\quad \text{and} & \quad
}  \ \ \sim  \ \
}\quad
		\implies \quad %
} & \sim \ \
}.
	\end{aligned}
\end{equation}

\begin{proof}
	The assumptions of the Lemma can be equivalently stated as
	\begin{equation}
\InputIfFileExists{Diagrams/ConvProof5.tikz}{}{\input{./figures/Diagrams/ConvProof5.tikz}}\ \ = \ \ %
\InputIfFileExists{Diagrams/ConvProof6.tikz}{}{\input{./figures/Diagrams/ConvProof6.tikz}}
		\quad
		\forall \, \tau \,,
	\end{equation}
	and
	\begin{equation}
\InputIfFileExists{Diagrams/ConvProof7.tikz}{}{\input{./figures/Diagrams/ConvProof7.tikz}}\ \ = \ \ %
\InputIfFileExists{Diagrams/ConvProof8.tikz}{}{\input{./figures/Diagrams/ConvProof8.tikz}}
		\quad
		\forall \, \tau \,.
	\end{equation}
	In particular, this means that $\forall \, \tau$
	\begin{equation}
		\hspace{-0.5cm}p \ %
\InputIfFileExists{Diagrams/ConvProof5.tikz}{}{\input{./figures/Diagrams/ConvProof5.tikz}}+(1-p)\
\InputIfFileExists{Diagrams/ConvProof7.tikz}{}{\input{./figures/Diagrams/ConvProof7.tikz}}\ \  =\ \	p\
\InputIfFileExists{Diagrams/ConvProof6.tikz}{}{\input{./figures/Diagrams/ConvProof6.tikz}}+(1-p)\
\InputIfFileExists{Diagrams/ConvProof8.tikz}{}{\input{./figures/Diagrams/ConvProof8.tikz}} \,.
	\end{equation}
	Linearity of $\tau$ implies
	\begin{equation}
\InputIfFileExists{Diagrams/ConvProof9.tikz}{}{\input{./figures/Diagrams/ConvProof9.tikz}}\ \  =\ \	%
\InputIfFileExists{Diagrams/ConvProof10.tikz}{}{\input{./figures/Diagrams/ConvProof10.tikz}}
	\end{equation}
	for all $\tau$, which proves the claim.
\end{proof}

\end{document}